\newtheorem{lemma}{Lemma}
\newtheorem{theorem}{Theorem}
\begin{document}
%
% paper title
% Titles are generally capitalized except for words such as a, an, and, as,
% at, but, by, for, in, nor, of, on, or, the, to and up, which are usually
% not capitalized unless they are the first or last word of the title.
% Linebreaks \\ can be used within to get better formatting as desired.
% Do not put math or special symbols in the title.

\title{Performance Analysis of Multi-Cell Millimeter Wave Massive MIMO Networks with Low-Precision ADCs}

% author names and affiliations
% use a multiple column layout for up to three different
% affiliations

\author{
%\IEEEauthorblockN{Author 1, Author 2, and Author 3}
\IEEEauthorblockN{Jindan Xu$^{\ast}$,~\emph{Student Member,~IEEE}, Wei Xu$^{\ast}$,~\emph{Senior Member,~IEEE}, Hua Zhang$^{\ast}$,~\emph{Member,~IEEE}, \\
 Geoffrey Ye Li$^{\dag}$,~\emph{Fellow,~IEEE}, and Xiaohu You$^{\ast}$,~\emph{Fellow,~IEEE}}\\
\IEEEauthorblockA{
$^{\ast}$National Mobile Communications Research Laboratory, Southeast University, Nanjing 210096, China\\
$^{\dag}$School of Electrical and Computer Engineering, Georgia Institute of Technology, Atlanta, GA 30332, USA\\
Email: \{jdxu, wxu, huazhang\}@seu.edu.cn, liye@ece.gatech.edu, xhyu@seu.edu.cn}
\vspace{-0.8cm}
\thanks{
Manuscript received April 27, 2018; revised August 17, 2018; accepted September 29, 2018.
This work of was supported in part by NSFC under grants 61871109, 61601115, 61571118, and U1534208, the Open Research Fund of the State Key Lab of ISN under ISN18-03, and the Six Talent Peaks project in Jiangsu Province under GDZB-005.
The editor coordinating the review of this paper and approving it for publication was X. Wang.
\textit{(Corresponding author: Wei Xu).}

J. Xu and W. Xu are with the National Mobile Communications Research Laboratory (NCRL), Southeast University, Nanjing 210096, China, and also with the State Key Laboratory of Integrated Services Networks, Xidian University, Xi'an 710071, China (jdxu@seu.edu.cn; wxu@seu.edu.cn).

H. Zhang and X. You are with the National Mobile Communications Research Laboratory (NCRL), Southeast University, Nanjing 210096, China (huazhang@seu.edu.cn; xhyu@seu.edu.cn).

Geoffrey Ye Li is with the School of Electrical and Computer Engineering, Georgia Institute of Technology, Atlanta, GA 30332, USA (liye@ece.gatech.edu).

}

}

% use for special paper notices
%\IEEEspecialpapernotice{(Invited Paper)}

% make the title area
\maketitle

% As a general rule, do not put math, special symbols or citations
% in the abstract
\begin{abstract}
In this paper, we investigate a multi-cell millimeter wave (mmWave) massive multiple-input multiple-output (MIMO) network with low-precision analog-to-digital converters (ADCs) at the base station (BS).
Each cell serves multiple users and each user is equipped with multiple antennas but driven by a single RF chain.
We first introduce a channel estimation strategy for the mmWave massive MIMO network and analyze the achievable rate with imperfect channel state information.
Then, we derive an insightful lower bound for the achievable rate, which becomes tight with a growing number of users.
The bound clearly demonstrates the impacts of the number of antennas and the ADC precision, especially for a single-cell mmWave network at low signal-to-noise ratio (SNR). It characterizes the tradeoff among various system parameters. Our analytical results are finally confirmed by extensive computer simulations.

\end{abstract}

\begin{IEEEkeywords}
%Massive MIMO, mmWave, ADC, beamforming, imperfect CSI.
Massive multiple-input multiple-output (MIMO), millimeter wave (mmWave), analog-to-digital converter (ADC), beamforming, imperfect channel state information (CSI).
\end{IEEEkeywords}

\IEEEpeerreviewmaketitle

\section{Introduction}
Massive multiple-input multiple-output (MIMO) is a critical technique to significantly improve the performance of the fifth generation (5G) cellular network \cite{MIMO1}.
In massive MIMO, the base station (BS) is equipped with hundreds, or even thousands, of antennas to provide high spectral and power efficiency.
However, both cost and power consumption increase dramatically with the number of antennas, partly because each antenna requires a pair of dedicated analog-to-digital converters (ADCs). Fortunately, there are two potential means of alleviating this challenging issue.
On one hand, low-precision ADCs can be employed since the power consumption decreases exponentially with the quantization precision \cite{ADC11}-\cite{ADC6}.
An overview on channel estimation, signal detector, and transmit precoding for massive MIMO using low-precision ADCs in future networks has been provided in \cite{ADC_magazine}.
Specifically in \cite{ADC1}, it has shown that 1-bit ADCs can achieve satisfactory performance in terms of theoretical capacity and symbol error rate (SER) in massive MIMO uplink systems.
Furthermore, the spectral efficiencies of a mixed-ADC system under energy constraint has been studied in \cite{ADC Mix}.
The mixed-ADC architecture in frequency-selective channels has been investigated in \cite{ADC Mix3}.
It has been demonstrated in \cite{ADC7} that low-precision, e.g., 2-3 bits, ADCs only cause limited sum rate loss under some mild assumptions for an amplify-and-forward relay uplink network.
Studies in \cite{ADC3} and \cite{ADC4} have analyzed the performance of low-precision transceivers in multiuser massive MIMO downlinks.
%Furthermore, it has been revealed that even 1-bit ADCs can provide the best performance in a few realistic scenarios.
On the other hand, radio-frequency (RF) chains can be also constrained to reduce the total number of required converters, which leads to a hybrid transceiver architecture \cite{Hybrid} \cite{Hybrid4}.
A low-complexity hybrid precoding method has been proposed in \cite{Hybrid_Liang}.
The study in \cite{Hybrid_Yu} has shown that hybrid beamforming can asymptotically approach the performance of fully digital beamforming for a sufficiently large number of antennas.
However, in many scenarios, low-precision ADCs inevitably deteriorate the performance while the architecture with limited RF chains sacrifices the multiplexing gain.
In practice, it is interesting to find a cost-efficiency tradeoff when employing low-precision ADCs and a limited number of RF chains \cite{ADC Fan} \cite{Hybrid5}.
%So far, few work has focused on the combination of these two approaches.

Meanwhile, in order to achieve ultra high data rates, the spectrum ranging from 30 GHz to 300 GHz, namely millimeter wave (mmWave), looks attractive in 5G \cite{mmWave1}.
The ten-fold increase in carrier frequency, compared to the current majority of wireless systems, implies that mmWave signals experience an order-of-magnitude increase in free-space loss \cite{mmWave_propagation1} \cite{mmWave_propagation2}.
Fortunately, the decrease in wavelength enables to pack a large number of antenna elements into small form factors.
Large antenna arrays in mmWave systems are leveraged to combat severe pathloss through a large beamforming gain \cite{mmWave2}.
In \cite{mmWave3} and \cite{mmWave4}, hybrid beamforming has been investigated in mmWave MIMO networks.
A joint beam selection scheme for analog precoding has been proposed in \cite{mmWave_hybrid1} and a relay hybrid precoding design has been studied in  \cite{mmWave_hybrid2}.
Different from conventional wireless channels in cellular networks, spatial sparsity emerges as a dominant nature in mmWave propagations \cite{Sparsity1}.
By exploiting the sparsity, a beamforming training algorithm has been proposed in \cite{Sparsity2}.
Then, random beamforming has been studied in \cite{Sparsity3} as well as a user scheduling algorithm proposed for beam aggregation.
For low-complexity hybrid precoding, an algorithm using generalized orthogonal matching pursuit has been proposed in \cite{mmWave_hybrid3} when the knowledge of channel sparsity is known.
%The spatial cross-correlation properties of mmWave massive MIMO channels have been studied in \cite{mmWave_Spar}.
%By exploiting the sparsity, a user scheduling algorithm has been proposed in \cite{Sparsity3} for beam aggregation in mmWave systems.

%Moreover, it was revealed in \cite{mmWave3} and \cite{mmWave4} that massive MIMO could be well applied incorporating with mmWave technologies.
%Further considering wide bandwidth up to multiple Giga Hz in mmWave communications, \cite{mmWave5} revealed that low complexity transceiver algorithms, e.g., compressed sensing, become important for channel estimation and beamforming design.

It is known that the availability of channel state information (CSI) plays a critical role in beamforming design \cite{MIMO3}.
A two-stage precoding scheme has been proposed in \cite{MIMO_add1} to reduce the overhead of both channel training and CSI feedback in massive MIMO systems.
Furthermore, an interference alignment and soft-space-reuse based cooperative transmission scheme has been proposed in \cite{MIMO_add2} and a low-cost channel estimator has been designed.
For systems with low-precision ADCs, conventional pilot-aided channel estimation can in some scenarios be used to acquire the CSI \cite{CE0}.
However, it can be hardly applied to the multiuser hybrid system because the number of RF chains is much smaller than the antenna number.
Therefore, channel estimation using overlapped beam patterns and rate adaptation has been proposed in \cite{CE3} and a limited feedback hybrid channel estimation has been studied in \cite{CE4}.
To overcome the drawback of the feedback-based mechanism in these methods, a low-complexity channel estimation method has been proposed in \cite{CE1}.
%This estimation method suits to both non-sparse and sparse mmWave environments while sparsity has been commonly exploited for most existing mmWave channel estimations \cite{CE5}-\cite{CE7}.
%We extend this method to multi-cell networks and pilot contamination exists due to the repetitive use of pilot sequences in different cells.

In this paper, we investigate a non-cooperative multi-cell mmWave system with a large-scale antenna array, where low-precision ADCs are used at the BS.
%According to the Bussgang theorem, the nonlinear quantization of both ADCs and DACs can be modeled as a linear operation added with uncorrelated noise \cite{ADC3}.
We assume that each cell serves multiple users and each user is equipped with multiple antennas but driven by a single RF chain.
Analog beamforming is therefore conducted at user sides based on the estimated CSI.
Most of the existing works, like \cite{ADC_magazine}-\cite{Hybrid_Yu}, focused on either low-precision quantization or hybrid architecture.
In our work, we study both the low-precision ADCs at the BS and analog beamforming at the user side.
This setup is of much interest due to its implementational popularity in practice \cite{ADC Fan} \cite{Hybrid5}.
To the best of our knowledge, there is few works investigating the performance of the network \cite{Both}.
Main contributions of this work are summarized as follows:

1)
We derive the ergodic achievable rate of the network with imperfect CSI by using an ADC quantization model based on the Bussgang theorem.
Although the popular tools, such as the law of large numbers and the central limit theorem, do not apply here due to the sparsity of mmWave channel,
we successfully derive a lower bound for the user ergodic rate with the help of stochastic calculations.
%The lower bound clearly demonstrates the impacts of the antenna number and ADC precision.

2)
Based on the derived lower bound, the impacts of various system parameters, including the ADC precision, signal and pilot SNRs, and the numbers of users and antennas, on the system performance have been characterized.
A typical scenario of a single-cell network is investigated by retrieving as a special case from our derived results.
We find that the received signal-to-interference-quantization-and-noise ratio (SIQNR) can be expressed as a scaling value of the original low SNR.

The rest of this paper is organized as follows.
Both ADC quantization and mmWave channel models are described in Section~\uppercase\expandafter{\romannumeral2}.
In Section~\uppercase\expandafter{\romannumeral3}, we introduce a two-step channel estimation method for the multi-cell hybrid system.
In Section~\uppercase\expandafter{\romannumeral4}, we analyze the achievable uplink rate with imperfect CSI and low-precision quantization error and derive a lower rate bound.
Then based on the bound, we analyze the performance under two special scenarios in Section~\uppercase\expandafter{\romannumeral5}.
Simulation results are presented in Section \uppercase\expandafter{\romannumeral6} and conclusions are drawn in Section~\uppercase\expandafter{\romannumeral7}.

\emph{Notations}: $\textbf{A}^T$, $\textbf{A}^*$ and $\textbf{A}^H$ represent the transpose, conjugate and conjugate transpose of $\textbf{A}$, respectively.
$\textbf{a}_{i}$ represents the $i$th column of $\textbf{A}$.
%$\textrm{tr}(\textbf{A})$ denotes the trace of $\textbf{A}$.
$\textrm{diag}(\textbf{A})$ keeps only the diagonal elements of $\textbf{A}$, while $\textrm{diag}\{a_1, a_2,...,a_N\}$ generates a diagonal matrix with entries $a_1, a_2,...,a_N$.
$\mathbb{E}\{\cdot\}$ is the expectation operator.
U$[a,b]$ denotes the uniform distribution between $a$ and $b$.
$\longrightarrow$ denotes the almost sure convergence.
%$\lceil\cdot\rceil$ is the ceiling function.

\section{System Model}
\iffalse
\begin{figure}[tb]
\centering\includegraphics[width=0.5\textwidth]{cell_3D}
\caption{A non-cooperative multi-cell system.}
\label{cell}
\end{figure}
\fi

\begin{figure*}[tb]
\centering\includegraphics[width=0.8\textwidth,bb=0 90 960 480]{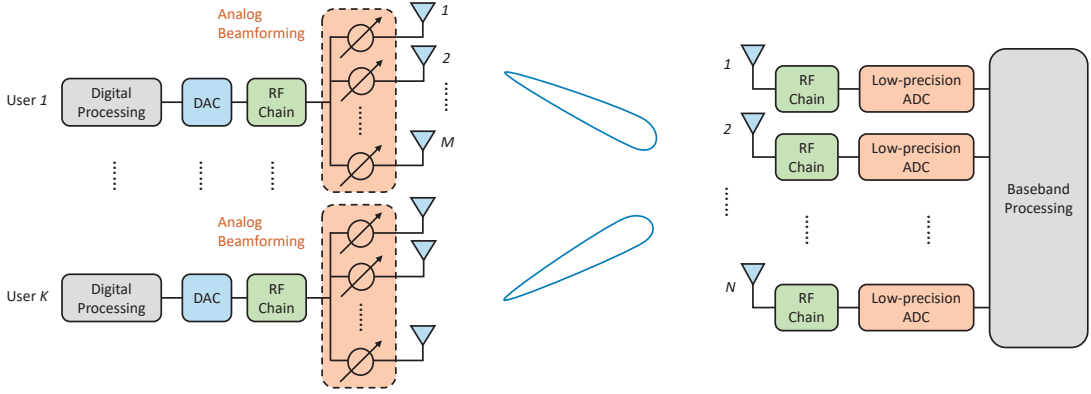}
\caption{Block diagram of the multiuser massive MIMO system in each cell.}
\label{block}
\end{figure*}

We consider a non-cooperative multi-cell system consisting of $L$ cells. %, as depicted in Fig. \ref{cell}.
In each cell, $K$ user terminals are served simultaneously and $N$ antennas are equipped at the BS.
Universal frequency reuse is exploited, and therefore both intra-cell and inter-cell interferences exist.
%Channel reciprocity is assumed, i.e., the downlink channel, $\textbf{H}^T_{jlk}$, is the Hermitian transpose of the uplink channel, $\textbf{H}_{jlk}$ .

\subsection{Quantization Model for Low-Precision ADCs}

%Fig. \ref{block} illustrates the block diagram of the multiuser multicell network.
As in Fig. \ref{block}, each user equips $M$ antennas driven by a single RF chain. The RF chain can access to all the $M$ antennas through $M$ phase shifters, which allows analog beamforming for both transmitting and receiving.
At the BS side, a pair of low-precision ADCs is exploited for each antenna for processing the in-phase and quadrature input signals.

It is in general difficult to accurately analyze the signal quantization error of low-precision ADCs. Fortunately, an approximately linear representation has been widely adopted by using the Bussgang theorem \cite{Bus1}.
This quantization model has been verified accurate enough for characterizing commonly used ADCs, especially for popular quantization levels in practice \cite{ADC3} \cite{rho}.
It decomposes the ADC quantization into two uncorrelated parts as
\begin{equation}
\label{ADC}
\mathcal{Q}_{AD}(\textbf{y})=\textbf{F}\textbf{y}+\textbf{n}_{q},
\end{equation}
where $\mathcal{Q}_{AD}(\cdot)$ is the quantization operation of ADC, $\textbf{y}\in \mathbb{C}^{N\times 1}$ denotes the vector before quantization, $\textbf{F}$ represents the quantization processing matrix, and $\textbf{n}_{q}\sim \mathcal{CN}(\mathbf{0},\sigma_{q}^2\textbf{I}_N)$ denotes the quantization noise.
From  \cite{Bus2} \cite{ADC2}, it follows
\begin{equation}
\label{F}
\textbf{F}=(1-\rho_{AD})\textbf{I}_N,
\end{equation}
and
\begin{equation}
\label{C_nq}
\mathbb{E}\{\textbf{n}_{q}{\textbf{n}_{q}}^H\}=\rho_{AD}(1-\rho_{AD})\textrm{diag}\left(\mathbb{E}\{\textbf{y}\textbf{y}^H\}\right),
\end{equation}
where $\rho_{AD}$ represents the distortion factor. The distortion factor depends on the ADC precision, $b$, representing the number of the quantized bits of the ADC.
%Typical values of $\rho_{AD}$ corresponding to different resolutions are listed in Table \ref{table_rho}.

\iffalse
For moderate-to-high-resolution quantizations, $\rho_{AD}$ can be well approximated by
\begin{equation}
\label{rho}
\rho_{AD}\approx\frac{\pi\sqrt{3}}{2}2^{-2b},
\end{equation}
which indicates that the quantization distortion decreases exponentially with $b$.
\fi

\iffalse
\begin{table}[tb]
\centering
\caption{Values of ADC quantization distortion factor $\rho_{AD}$ \cite{rho}}
\label{table_rho}
\begin{IEEEeqnarraybox}[\IEEEeqnarraystrutmode\IEEEeqnarraystrutsizeadd{2pt}{1pt}]{v/c/v/c/v/c/v/c/v/c/v/c/v}
\IEEEeqnarrayrulerow\\
& \mbox{$b$} && \mbox{1} && \mbox{2} && \mbox{3} && \mbox{4} && \mbox{5} &\\
\IEEEeqnarrayseprow[1pt]\\
\IEEEeqnarrayrulerow\\
\IEEEeqnarrayseprow[4pt]\\
& \mbox{$\rho_{AD}$}  && \mbox{0.3634} && \mbox{0.1175} && \mbox{0.03454} && \mbox{0.009497} && \mbox{0.002499} &\IEEEeqnarraystrutsize{0pt}{0pt}\\
\IEEEeqnarrayseprow[3pt]\\
\IEEEeqnarrayrulerow
\end{IEEEeqnarraybox}
\end{table}
\fi

\subsection{Channel Model with Hybrid Architecture}
The uplink channel matrix from the $k$th user in the $l$th cell to the $j$th BS, $\textbf{H}_{jlk} \in \mathbb{C}^{N\times M}$, can be expressed as \cite{Sparsity1} \cite{channel}
\begin{equation}
\label{H}
\textbf{H}_{jlk}=\beta_{jlk}^{\frac{1}{2}}\textbf{h}_{B,jlk}\textbf{h}_{U,jlk}^H,
\end{equation}
where $\beta_{jlk}$ denotes the large-scale fading from the $k$th user in cell $l$ to BS $j$,
$\textbf{h}_{U,jlk}\in \mathbb{C}^{M\times 1}$ and $\textbf{h}_{B,jlk}\in \mathbb{C}^{N\times 1}$ denote the antenna array response vectors of the $k$th user in cell $l$ and the BS in cell $j$, respectively.
The small-scale fadings are represented by $\textbf{h}_{U,jlk}$ and $\textbf{h}_{B,jlk}$.
Due to the sparsity of mmWave channels, each of $\textbf{h}_{U,jlk}$ and $\textbf{h}_{B,jlk}$ is in general a single line-of-sight (LoS) path depending on the corresponding angle of incidence.
In particular, we have
\begin{equation}
\label{H_U}
\textbf{h}_{U,jlk}=[1,e^{-j2\pi\frac{d}{\lambda}\cos\varphi_{jlk}},...,e^{-j2\pi(M-1)\frac{d}{\lambda}\cos\varphi_{jlk}}]^T,
\end{equation}
and
\begin{equation}
\label{H_B}
\textbf{h}_{B,jlk}=[1,e^{-j2\pi\frac{d}{\lambda}\cos\theta_{jlk}},...,e^{-j2\pi(N-1)\frac{d}{\lambda}\cos\theta_{jlk}}]^T,
\end{equation}
where $\varphi_{jlk}\sim$ U$[0,\pi]$ and $\theta_{jlk}\sim$ U$[0,\pi]$ are the corresponding angles of incidence at the antenna arrays of user $k$ in cell $l$ and BS $j$, respectively, $d$ is the distance between adjacent antennas, and $\lambda$ is the wavelength of radio signals at the carrier frequency.
Typically, let $d=\frac{\lambda}{2}$ to minimize the space occupied by the massive antenna array while still achieving the optimal diversity \cite{channel}.

Since each user is equipped with a single RF chain even if it has multiple antennas as illustrated in Fig. \ref{block}, analog beamforming is conducted at the user sides.
Let $\textbf{w}_{lk}\in \mathbb{C}^{M\times 1}$ be the beamforming vector at user $k$ in cell $l$, which is determined by the estimated angle of arrival (AoA) to be elaborated in Section \uppercase\expandafter{\romannumeral3}.A.
The equivalent uplink channel from users in cell $l$ to BS $j$ can be expressed as
\begin{align}
\bar{\textbf{H}}_{jl}&=[\textbf{H}_{jl1}\textbf{w}_{l1},\textbf{H}_{jl2}\textbf{w}_{l2},...,\textbf{H}_{jlK}\textbf{w}_{lK}]
\nonumber
\\
&=\left[\beta_{jl1}^{\frac{1}{2}}c_{jl1}\textbf{h}_{B,jl1},\beta_{jl2}^{\frac{1}{2}}c_{jl2}\textbf{h}_{B,jl2},...,\beta_{jlk}^{\frac{1}{2}}c_{jlK}\textbf{h}_{B,jlK}\right],
\label{Heq}
\end{align}
where $c_{jlk}$ represents the beamforming gain of the $k$th user, defined as
\begin{equation}
\begin{aligned}
\label{c}
c_{jlk}\triangleq\textbf{h}_{U,jlk}^H\textbf{w}_{lk}, ~k=1,2,...,K.
\end{aligned}
\end{equation}

%Nowthat we have achieved the equivalent channel, both the uplink and downlink received signals can be obtained.
In the uplink, the received analog signals at BS are converted by ADCs before detection.
After the ADC operation, maximal ratio combining (MRC) is utilized for signal detection. At BS $j$, the received signal, $\textbf{y}_j\in \mathbb{C}^{K\times 1}$, can be expressed as
\begin{equation}
\label{y_ul}
\textbf{y}_j=\hat{\textbf{H}}_{jj}^H\mathcal{Q}_{AD}\left(\sqrt{P_t}\sum_{l=1}^L\bar{\textbf{H}}_{jl}\textbf{x}_l+\textbf{n}_j\right),
\end{equation}
where $\textbf{x}_l\in \mathbb{C}^{K\times 1}$ is normalized uplink data from all $K$ users in cell $l$, i.e., $\mathbb{E}\{\textbf{x}_l{\textbf{x}_l}^{H}\}=\textbf{I}_{K}$, $P_t$ is the transmit power of each user, and $\textbf{n}_j\sim \mathcal{CN}(\mathbf{0},\sigma_n^2\textbf{I}_{N})$ denotes the additive white Gaussian noise (AWGN) in cell $j$ with $\sigma_n^2$ representing the noise power.
Here, $\hat{\textbf{H}}_{jj}$ denotes the estimate of equivalent channel $\bar{\textbf{H}}_{jj}$ within cell $j$, which will be discussed later in Section~\uppercase\expandafter{\romannumeral3}.B.

\iffalse
As for the downlink, the received signal at users in cell $l$, $\textbf{y}_l^{\textrm{d}}\in \mathbb{C}^{K\times 1}$, can be expressed as
\begin{equation}
\label{y_dl}
\textbf{y}_l^{\textrm{d}}=\sum_{j=1}^L\bar{\textbf{H}}_{jl}^T \mathcal{Q}_{DA}\left(\textbf{P}_{j} \textbf{x}_j^{\textrm{d}}\right)+\textbf{n}^{\textrm{d}}_l,
\end{equation}
where $\textbf{x}^{\textrm{d}}_j\in \mathbb{C}^{K\times 1}$ is downlink data from BS $j$, $\textbf{n}^{\textrm{d}}_l\sim \mathcal{CN}(0,\sigma_n^2)$ is the AWGN, and $\textbf{P}_{j}\in \mathbb{C}^{N\times K}$, denoting the precoding matrix in cell $j$, applies the matched filter massive MIMO beamforming to users as
\begin{equation}
\label{P_MRC}
\textbf{P}_{j}=\sqrt{\frac{P^{\textrm{d}}_t}{\textrm{tr}\left( \hat{\textbf{H}}_{jj}^H \hat{\textbf{H}}_{jj}   \right)}}  \hat{\textbf{H}}_{jj}^*,
\end{equation}
where $P^{\textrm{d}}_t$ is the downlink transmit power.
\fi

By substituting the ADC model in \eqref{ADC} to \eqref{y_ul}, the detected signal at BS $j$ can be expressed as
\begin{equation}
\begin{aligned}
\label{y_ul_2}
\textbf{y}_{j}\!=(\!1\!-\!\rho_{AD}\!)\sqrt{P_t}\hat{\textbf{H}}_{jj}^H\sum_{l=1}^L\bar{\textbf{H}}_{jl}\textbf{x}_l
\!+\!(1\!-\!\rho_{AD})\hat{\textbf{H}}_{jj}^H\textbf{n}_j\!+\!\hat{\textbf{H}}_{jj}^H\textbf{n}_{q,j}\!,
\end{aligned}
\end{equation}
where $\textbf{n}_{q,j}\sim \mathcal{CN}(\mathbf{0},{\sigma_{q,j}}^2 \textbf{I}_N)$ denotes the quantization noise at BS $j$ arising from low-precision ADCs.
From \eqref{C_nq}, \eqref{H_B}, \eqref{Heq} and \eqref{y_ul}, we have
\begin{equation}
\label{sigma_q}
{\sigma_{q,j}}^2=\rho_{AD}(1-\rho_{AD})\left(\sigma_n^2+P_t\sum_{l=1}^L\sum_{k=1}^K\beta_{jlk} |c_{jlk}|^2\right).
\end{equation}
%In \eqref{y_ul_2}, we obtain the expression of the received signals for the uplink channel with the estimated channel $\hat{\textbf{H}}_{jj}$.

\iffalse
While for the downlink, substituting the DAC quantization model in \eqref{DAC} to \eqref{y_dl}, the received signal in cell $l$ becomes
\begin{equation}
\label{y_dl_2}
\textbf{y}_l^{\textrm{d}}=\sqrt{1-\rho_{DA}} \sum_{j=1}^L\bar{\textbf{H}}_{jl}^T \textbf{P}_{j} \textbf{x}_j^{\textrm{d}}+\sum_{j=1}^L\bar{\textbf{H}}_{jl}^T \textbf{n}^{\textrm{d}}_{q,j}+\textbf{n}^{\textrm{d}}_l,
\end{equation}
where $\textbf{n}^{\textrm{d}}_{q,j}\sim \mathcal{CN}(\mathbf{0},{\sigma^{\textrm{d}}_{q,j}}^2 \textbf{I}_N)$ denotes the quantization noise arising from DACs and is satisfies
\begin{equation}
\begin{aligned}
\label{cor_DA2}
\mathbb{E}\{\textbf{n}^{\textrm{d}}_{q,j}{\textbf{n}^{\textrm{d}}_{q,j}}^H\}=\rho_{DA}\mathbb{E}\{\textrm{diag}(\textbf{P}_j\textbf{P}_j^H)\}.
\end{aligned}
\end{equation}
\fi

\section{Channel Estimation}
%Since we employ only one RF chain to serve multiple antennas at user side, the conventional pilot-aided channel estimation can not achieve the analog beamforming vectors. Fortunately,
%\cite{CE1} proposed a new channel estimation method for hybrid MIMO system, adding a step of AoA estimation with conventional uplink channel estimation sending orthogonal pilot sequences. In this section, we extend this two-step approach into a non-cooperative multi-cell network and analyze the pilot contamination.
In the above communication process, a critical procedure includes determining the analog beamforming vector, $\textbf{w}_{lk}$, at user side and the digital combining matrices, $\hat{\textbf{H}}_{jj}$, at the BS. The design relies on the availability of CSI at the corresponding nodes.

\subsection{AoA Estimation}
In order to determine the beamforming vectors, we first estimate the angle of incidence at each user from the BS in the same cell. % and use it for analog beamforming design.
A similar procedure as in \cite{CE1} is introduced.
In order to avoid inter-cell interference, each cell conducts this step in an orthogonal way.
Taking cell $l$ for instance, BS $l$ broadcasts a frequency tone $x=\cos2\pi ft$ from an arbitrary antenna to all users.
%Then, users generate the same carrier frequency, $f$, to downconvert the received signals.
Assuming channel reciprocity, the received signal at user $k$ in cell $l$ can be expressed as
\begin{equation}
\label{AoA}
r_{lk}=\beta_{llk}^{\frac{1}{2}}\tilde{\textbf{w}}_{lk}^T\textbf{h}_{U,llk}^* x +\tilde{\textbf{w}}_{lk}^T\textbf{n}^{\textrm{A}}_{lk},
\end{equation}
where $\textbf{n}^{\textrm{A}}_{lk}$ is the AWGN at user $k$ in cell $l$. The receiving beamforming vector, $\tilde{\textbf{w}}_{lk}$, is expressed as
\begin{equation}
\label{wk}
\tilde{\textbf{w}}_{lk}=\frac{1}{\sqrt{M}}\left[1,e^{-j2\pi\frac{d}{\lambda}\cos\tilde{\varphi}_{lk}},...,e^{-j2\pi(M-1)\frac{d}{\lambda}\cos\tilde{\varphi}_{lk}}\right]^T,
\end{equation}
where $\tilde{\varphi}_{lk}$ is the phase shift of the receiving antenna array.
To estimate AoA, we resort to choosing the optimal $\tilde{\varphi}_{lk}$ to maximize the power of received signal $r_{lk}$.
Since ideal analog phase shifter with continuous phase is less practical, we consider an analog beamformer of limited resolution. The value of $\tilde{\varphi}_{lk}$ is chosen from a codebook:
\begin{equation}
\label{psi}
\bm{\psi}=\left[\zeta,~3\zeta,~5\zeta,...,(2^{B+1}-1)\zeta\right],
\end{equation}
where $\zeta=\frac{\pi}{2^{B+1}}$ and $B$ is the number of quantization bits for phases.
Then, the estimated AoA of user $k$ in cell $l$ is chosen as
\begin{equation}
\label{AoA2}
\hat{\varphi}_{lk}=\arg \max\limits_{ \tilde{\varphi}_{lk} \in \bm{\psi}} |r_{lk}|.
\end{equation}
After obtaining $\hat{\varphi}_{lk}$, the beamforming vector for user $k$ in cell $l$ is accordingly set to:
\begin{equation}
\label{wk2}
\textbf{w}_{lk}=\frac{1}{\sqrt{M}}\left[1,e^{-j2\pi\frac{d}{\lambda}\cos\hat{\varphi}_{lk}},...,e^{-j2\pi(M-1)\frac{d}{\lambda}\cos\hat{\varphi}_{lk}}\right]^T.
\end{equation}
Given that $\textbf{w}_{lk}$ is determined, beamforming gain from BS $j$ to user $k$ in cell $l$, i.e., $c_{jlk}$ in \eqref{c}, can be obtained by measuring the received signal power. %which is sent from the BS $j$.
Note that the above AoA estimation is conducted in the downlink while the obtained analog beamforming vector, $\textbf{w}_{lk}$, is used for uplink transmission.
This is realizable thanks to the assumption of channel reciprocity in time division duplex (TDD) mode.

\subsection{Demodulation Channel Estimation}
With $\textbf{w}_{lk}$ in \eqref{wk2} and the obtained beamforming gain $c_{jlk}$, we can estimate the uplink channel by transmitting orthogonal pilots from users to the BS at this step.
After analog beamforming, we only need to estimate equivalent channel $\bar{\textbf{H}}_{jl}$ in \eqref{Heq}, instead of the original $\textbf{H}_{jlk}~(k=1,2,...,K)$ in \eqref{H} with a much larger size. Thus, the number of the required pilots decreases from $MK$ to $K$ and the dimension of matrix computation is greatly reduced.
If all cells reuse the same pilot sequences, pilot contamination should also be considered. %\cite{mul_cell}.
Since low-precision ADCs are deployed at the BS, the accuracy of channel estimation is also affected by the ADC quantization.

Let user $k$ in each cell send pilot vector $\bm{\phi}_k\in \mathbb{C}^{\tau\times 1}$ where $\tau\geq K$ is the pilot length, which is orthonormal for different users. Define the pilot matrix, $\bm{\Psi}\in \mathbb{C}^{\tau\times K}$, as
\begin{equation}
\label{pilot}
\bm{\Psi}=[\bm{\phi}_1,\bm{\phi}_2,...,\bm{\phi}_K],
\end{equation}
then $\bm{\Psi}^H\bm{\Psi}=\textbf{I}_K$.
The received pilot signal at the $j$th BS before ADCs equals
\begin{equation}
\label{yp}
\textbf{Y}_{p,j}=\sqrt{P_p} \sum_{l=1}^L\bar{\textbf{H}}_{jl} \bm{\Psi}^T+\textbf{n}_{p,j},
\end{equation}
where $P_p$ is the pilot power and $\textbf{n}_{p,j}=[\textbf{n}_{p,j1},\textbf{n}_{p,j2},...,\textbf{n}_{p,j\tau}]$ denotes the AWGN with $\textbf{n}_{p,ji}\sim\mathcal{CN}(\mathbf{0},\sigma_n^2\textbf{I}_N)$ for $i=1,2,...,\tau$.

Note that $\textbf{Y}_{p,j}$ is quantized by the low-precision ADCs before being processed for channel estimation.
According to \eqref{ADC} and \eqref{F}, the received pilot symbols after the ADC quantization can be expressed as
\begin{equation}
\begin{aligned}
\label{ypq}
\textbf{Y}_{qp,j}&= \mathcal{Q}_{AD}\left(\textbf{Y}_{p,j} \right)\\
&=(1\!-\!\rho_{AD})\sqrt{ P_p} \sum_{l=1}^L\bar{\textbf{H}}_{jl} \bm{\Psi}^T+(1\!-\!\rho_{AD})\textbf{n}_{p,j}+\textbf{n}_{qp,j},
\end{aligned}
\end{equation}
where $\textbf{n}_{qp,j}=[\textbf{n}_{qp,j1},\textbf{n}_{qp,j2},...,\textbf{n}_{qp,j\tau}]$ denotes the quantization noise and $\textbf{n}_{qp,ji}\sim\mathcal{CN}(\mathbf{0},\sigma_{qp,j}^2\textbf{I}_N)~(i=1,2,...,\tau)$.
Applying a popular discrete Fourier transform matrix $\bm{\Psi}$, by substituting \eqref{C_nq}, \eqref{H_B}, and \eqref{Heq}, the quantized noise power equals \cite{CE2}
\begin{align}
\label{C_npq}
\sigma_{pq,j}^2=\rho_{AD}(1-\rho_{AD})\left(\sigma_n^2+\frac{P_p}{\tau}\sum_{l=1}^L\sum_{k=1}^K \beta_{jlk} |c_{jlk}|^2 \right).
\end{align}

After the ADC quantization, the minimum mean-square-error (MMSE) estimator in \cite{MMSE} is used. The channel estimate can be expressed as
\begin{equation}
\begin{aligned}
\label{H_est}
&\hat{\textbf{H}}_{jj}=\frac{1}{(1-\rho_{AD})\sqrt{ P_p}}\textbf{Y}_{qp,j}\bm{\Psi}^* \textbf{G}_j\\
&\!=\!\!\left(\!\bar{\textbf{H}}_{jj}\!+\!
\underbrace{\!\sum_{l\neq j}\bar{\textbf{H}}_{jl}\!+\!\frac{1}{\!\sqrt{\! P_p}}\textbf{n}_{p,j}\bm{\Psi}^*\!+\!\frac{1}{(1\!-\!\rho_{AD})\!\sqrt{\!P_p}}\textbf{n}_{pq,j}\bm{\Psi}\!^*\!
}_{\textbf{E}_j}
\!\right)\!\textbf{G}_j,
%&\triangleq\left(\bar{\textbf{H}}_{jj}+\textbf{E}_j\right)\textbf{G}_j,
\end{aligned}
\end{equation}
where $\textbf{G}_j$ is the estimation matrix and $\textbf{E}_j$ is the channel estimation error matrix denoted as $\textbf{E}_j=[\textbf{e}_{j1},\textbf{e}_{j2},...,\textbf{e}_{jK}]$.
\iffalse
\begin{equation}
\begin{aligned}
\label{E0}
\textbf{E}_j=[\textbf{e}_{j1},\textbf{e}_{j2},...,\textbf{e}_{jK}],
%\textbf{E}_j=\sum_{l\neq j}\bar{\textbf{H}}_{jl}\!+\!\frac{1}{\!\sqrt{ P_p}}\textbf{n}_{p,j}\bm{\Psi}^*\!+\!\frac{1}{(1\!-\!\rho_{AD})\!\sqrt{ P_p}}\textbf{n}_{pq,j}\bm{\Psi}^*.
\end{aligned}
\end{equation}
\fi
By using \eqref{Heq}, we have
\begin{align}
\textbf{e}_{jk}=&\sum_{l\neq j}\beta_{jlk}^{\frac{1}{2}}c_{jlk}\textbf{h}_{B,jlk}+\frac{1}{\sqrt{ P_p}}\textbf{n}_{p,j}\bm{\phi}_k^*\nonumber\\
&~~~~~~~~~~~~~~~+\frac{1}{(1-\rho_{AD})\sqrt{ P_p}}\textbf{n}_{pq,j}\bm{\phi}_k^*.
\label{e}
\end{align}
To obtain $\textbf{G}_j$ via MSE minimization, we utilize the asymptotical orthogonality of $\bar{\textbf{H}}_{jl}$ for large $N$, which is presented \emph{Lemma~\ref{lemma_ortho}} in Appendix~D.
Further from \eqref{H_B} and \eqref{Heq}, $\textbf{G}_j$ is directly derived as
\begin{equation}
\begin{aligned}
\label{G}
\textbf{G}_j
%&\!=\!\textbf{B}_{jj}\textbf{C}_{jj}^H\textbf{C}_{jj} \! \left[\!\sum_{l=1}^L\textbf{B}_{jl}\textbf{C}_{jl}^H\textbf{C}_{jl}\!+\!\frac{\sigma_n^2}{ P_p}\textbf{I}_K\!+\!\frac{\sigma_{pq,j}^2}{(\!1\!-\!\rho_{AD}\!)^2 P_p}\textbf{I}_K\!\right]^{-1}\\
&\triangleq\textbf{B}_{jj}\textbf{C}_{jj}^H\textbf{C}_{jj}
\left[\sum_{l=1}^L\textbf{B}_{jl}\textbf{C}_{jl}^H\textbf{C}_{jl}+\mu_j\textbf{I}_K\right]^{-1},
\end{aligned}
\end{equation}
where we define that $\textbf{C}_{jl}\triangleq\textrm{diag}\{c_{jl1},c_{jl2},...,c_{jlK}\}$, $\textbf{B}_{jl}\triangleq\textrm{diag}\{\beta_{jl1},\beta_{jl2},...,\beta_{jlK}\}$ for $j,l=1,2,...,L$, and
\begin{equation}
\begin{aligned}
\label{mu}
\mu_j\triangleq\frac{\sigma_n^2}{ P_p}+\frac{\sigma_{pq,j}^2}{(1-\rho_{AD})^2 P_p}.
\end{aligned}
\end{equation}

\iffalse
\begin{equation}
\begin{aligned}
\label{E}
\textbf{E}_j=\sum_{l\neq j}\bar{\textbf{H}}_{jl}\!+\!\frac{1}{\!\sqrt{\!\tau P_p}}\textbf{n}_{p,j}\bm{\Psi}^*\!+\!\frac{1}{(1\!-\!\rho_{AD})\!\sqrt{\!\tau P_p}}\textbf{n}_{pq,j}\bm{\Psi}^*.
\end{aligned}
\end{equation}
\fi

\iffalse
where $\textbf{e}_{jk} \sim \mathcal{CN}(\mathbf{0},\textbf{R}_{jk}) (k=1,2,...,K)$ where
\begin{equation}
\begin{aligned}
\label{sigma_0}
\textbf{R}_{jk}=&\left(\sum_{l\neq j}\beta_{jl}^{\frac{1}{2}}c_{jlk}\textbf{h}_{B,jlk}\right)\left(\sum_{l\neq j}\beta_{jl}^{\frac{1}{2}}c_{jlk}\textbf{h}_{B,jlk}\right)^H\\
&+\left(\frac{\sigma_n^2}{\tau P_p}+\frac{\sigma_{pq,j}^2}{(1-\rho_{AD})^2\tau P_p}\right)\textbf{I}_N.
\end{aligned}
\end{equation}
\fi

\iffalse
On the other hand, due to MMSE theorem, $\hat{\textbf{H}}_{eq}$ can also be expressed as
\begin{equation}
\label{H_est2}
\hat{\textbf{H}}_{eq}=\bar{\textbf{H}}_{jl}-\textbf{E},
\end{equation}
where $\textbf{E}\in \mathbb{C}^{N\times K}$ denotes the Gaussian channel estimation error and the variance of the $k$ column entries are \cite{CE2}
\begin{equation}
\label{C_npq}
\sigma_{e,k}^2=\frac{((1-\rho_{AD})^2\sigma_n^2+\sigma_{pq}^2)\beta_k |c_k|^2}{(1-\rho_{AD})^2\tau P_p\beta_k |c_k|^2+(1-\rho_{AD})^2\sigma_n^2+\sigma_{pq}^2}.
\end{equation}
Note that $\hat{\textbf{H}}_{eq}$ is independent of $\textbf{E}$.
\fi

\section{Uplink Achievable Rate}
In this section, we are ready to analyze the uplink achievable rate with low-precision ADC quantization and the above channel estimation. We also derive a tight lower bound for the achievable rate, which provides more insights.

\subsection{Ergodic Achievable Rate Analysis}
Let us begin with the expression of the uplink received signal with estimated CSI.
Substituting the estimated channel matrix in \eqref{H_est} into \eqref{y_ul_2}, the detected received vector is expressed as
\begin{align}
\textbf{y}_{j}=&(1\!-\!\rho_{AD})\!\sqrt{P_t}\textbf{G}_j^H(\bar{\textbf{H}}_{jj}^H+\textbf{E}_{j}^H)\sum_{l=1}^L\bar{\textbf{H}}_{jl}\textbf{x}_l\nonumber\\
&+(1-\rho_{AD})\textbf{G}_j^H(\bar{\textbf{H}}_{jj}^H+\textbf{E}_{j}^H)\textbf{n}_j+\textbf{G}_j^H(\bar{\textbf{H}}_{jj}^H+\textbf{E}_{j}^H)\textbf{n}_{q,j}.
\label{yd_MRC}
\end{align}
\iffalse
\begin{equation}
\begin{aligned}
\label{yd_MRC}
\textbf{y}_{j}=&(1-\rho_{AD})\sqrt{P_t}\textbf{G}_j^H\bar{\textbf{H}}_{jj}^H\bar{\textbf{H}}_{jj}\textbf{x}_j\\
&\!+\!(1\!-\!\rho_{AD})\!\sqrt{P_t}\textbf{G}_j^H\bar{\textbf{H}}_{jj}^H\sum_{l\neq j}\bar{\textbf{H}}_{jl}\textbf{x}_l\\
&\!+\!(1\!-\!\rho_{AD})\!\sqrt{P_t}\textbf{G}_j^H\textbf{E}_{j}^H\sum_{l=1}^L\bar{\textbf{H}}_{jl}\textbf{x}_l\\
&+(1-\rho_{AD})\textbf{G}_j^H(\bar{\textbf{H}}_{jj}^H+\textbf{E}_{j}^H)\textbf{n}_j+\textbf{G}_j^H(\bar{\textbf{H}}_{jj}^H+\textbf{E}_{j}^H)\textbf{n}_{q,j}.
\end{aligned}
\end{equation}
\fi
For homogeneous users, without loss of generality, we focus on the detected signal of user $k$, i.e., $y_{jk}$.
From \eqref{yd_MRC} and by substituting \eqref{Heq}, the detected signal of user $k$ is
\begin{align}
&\frac{y_{jk}}{g_{jk}}=
\nonumber\\
&\underbrace{
(\!1\!-\!\rho_{AD}\!)\sqrt{P_t}\left(\!\beta_{jjk}^\frac{1}{2}c_{jjk}^*\textbf{h}_{B,jjk}^H\!+\!\textbf{e}_{jk}^H\!\right)
\sum_{l=1}^L  \! \sum_{i=1}^K\!\beta_{jli}^\frac{1}{2}c_{jli}\textbf{h}_{B,jli}x_{li}
}_{S_{r,jk}}\nonumber \\
&+\underbrace{
(1-\rho_{AD})\left(\beta_{jjk}^\frac{1}{2}c_{jjk}^*\textbf{h}_{B,jjk}^H+\textbf{e}_{jk}^H\right)\textbf{n}_j
}_{I_{n,jk}} \nonumber\\
&+\underbrace{
\left(\beta_{jjk}^\frac{1}{2}c_{jjk}^*\textbf{h}_{B,jjk}^H+\textbf{e}_{jk}^H\right)\textbf{n}_{q,j}
}_{I_{q,jk}},
\label{yd_MRC_k}
\end{align}
\iffalse
\begin{equation}
\begin{aligned}
\label{yd_MRC_k}
\frac{y_{jk}}{g_{jk}}
=&\underbrace{
(1-\rho_{AD})\sqrt{P_t}\beta_{jjk}|c_{jjk}|^2\textbf{h}_{B,jjk}^H\textbf{h}_{B,jjk}x_{jk}
}_{S_{r,jk}} \\
&+\underbrace{
\!(1\!-\!\rho_{AD})\sqrt{P_t}\beta_{jjk}^{\frac{1}{2}}c_{jjk}^*\textbf{h}_{B,jjk}^H\sum_{i\neq k}\beta_{jji}^{\frac{1}{2}}c_{jji}\textbf{h}_{B,jji}x_{ji}
}_{I_{u,jk}} \\
&+\underbrace{
\!(\!1\!-\!\rho_{AD}\!)\!\sqrt{P_t}\beta_{jjk}^\frac{1}{2}c_{jjk}^*\textbf{h}_{B,jjk}^H\!\sum_{l\neq j}  \sum_{i=1}^K\!\beta_{jli}^\frac{1}{2}c_{jli}\textbf{h}_{B,jli}x_{li}
}_{I_{c,jk}} \\
&+\underbrace{
(1-\rho_{AD})\sqrt{P_t}\textbf{e}_{jk}^H\sum_{l=1}^L  \sum_{i=1}^K\beta_{jli}^\frac{1}{2}c_{jli}\textbf{h}_{B,jli}x_{li}
}_{I_{e,jk}} \\
&+\underbrace{
(1-\rho_{AD})(\beta_{jjk}^\frac{1}{2}c_{jjk}^*\textbf{h}_{B,jjk}^H+\textbf{e}_{jk}^H)\textbf{n}_j
}_{I_{n,jk}} \\
&+\underbrace{
(\beta_{jjk}^\frac{1}{2}c_{jjk}^*\textbf{h}_{B,jjk}^H+\textbf{e}_{jk}^H)\textbf{n}_{q,j}
}_{I_{q,jk}},
\end{aligned}
\end{equation}
\fi
where $g_{jk}$ denotes the $k$th diagonal element of $\textbf{G}_j$ and $x_{jk}$ denotes the $k$th element of vector $\textbf{x}_j$.
In \eqref{yd_MRC_k}, $I_{n,jk}$ represents the equivalent thermal noise, $I_{q,jk}$ denotes the quantization noise, and $S_{r,jk}$ represents the received signal at BS $j$ from all the $LK$ users, among which the desired signal term from user $k$ in cell $j$ equals
\begin{align}
S_{d,jk}=(1-\rho_{AD})\sqrt{P_t}\beta_{jjk}|c_{jjk}|^2\textbf{h}_{B,jjk}^H\textbf{h}_{B,jjk}x_{jk}.
\label{Sd0}
\end{align}
%Note that $S_{r,jk}$ contains signals from other users both within the same cell and from other cells.

Note that the common scaler, $g_{jk}$, in the left hand side of \eqref{yd_MRC_k} does not affect the evaluation of the received SIQNR.
Therefore, we can drop out $g_{jk}$ and remove the subscript $jk$ for notational brevity.
Using \eqref{Sd0} and \eqref{H_B}, the desired signal power can be expressed as
\begin{align}
S&=\mathbb{E}_{\textbf{x}_j} \{ |S_d|^2  \}\nonumber\\
&=(1-\rho_{AD})^2P_t\beta_{jjk}^2|c_{jjk}|^4N^2.
\label{Sd1}
\end{align}
From \eqref{yd_MRC_k}, we can get the power of interferences and noises as
\begin{align}
I&=\mathbb{E}_{\textbf{x}_1,\textbf{x}_2,...,\textbf{x}_L} \left\{ |S_r+I_n+I_q|^2-|S_d|^2\right\}\nonumber\\
&\overset{(a)}=\mathbb{E}_{\textbf{x}_1,\textbf{x}_2,...,\textbf{x}_L} \left\{ |I_n|^2+|I_q|^2+|S_r|^2-|S_d|^2\right\}\nonumber \\
&\overset{(b)}=\mathbb{E}\{|I_n|^2\}+\mathbb{E}\{|I_q|^2\}+\mathbb{E}\{|S_r|^2\}-S,
\label{I}
\end{align}
where $(a)$ comes from the fact that both channel and quantization noises are uncorrelated with the received signal and $(b)$ utilizes \eqref{Sd1}.
Detailed derivations of the first three terms in \eqref{I} are given in Appendix~A.
Thus far, the SIQNR can be expressed as
\begin{equation}
\begin{aligned}
\label{gamma}
\gamma=\frac{S}{I}.
\end{aligned}
\end{equation}

\iffalse
\begin{equation}
\begin{aligned}
\label{I}
I&=\mathbb{E}_{\textbf{x}_1,\textbf{x}_2,...,\textbf{x}_L} \left\{ |I_u|^2+|I_c|^2+|I_e|^2+|I_n|^2+|I_q|^2\right.\\
&~~~~~~~~~~~~~~~~~~~\left.+2\mathfrak{R}\{I_e(S_r+I_u+I_c)^*\}  \right\}\\
&=(1-\rho_{AD})^2P_t\beta_{jjk}|c_{jjk}|^2\sum_{i\neq k}\beta_{jji}|c_{jji}|^2|\textbf{h}_{B,jjk}^H\textbf{h}_{B,jji}|^2\\
&+\!(1\!-\!\rho_{AD})^2P_t\beta_{jjk}|c_{jjk}|^2\sum_{l\neq j}\sum_{i=1}^K\beta_{jli}|c_{jli}|^2|\textbf{h}_{B,jjk}^H\textbf{h}_{B,jli}|^2\\
&+(1-\rho_{AD})^2P_t \sigma_{e,jk}^2  \sum_{l=1}^L  \sum_{i=1}^K \beta_{jli} |c_{jli}|^2 \\
&+(1-\rho_{AD})^2 (N\beta_{jjk}|c_{jjk}|^2+\sigma_{e,jk}^2) \sigma_n^2 \\
&+(N\beta_{jjk}|c_{jjk}|^2+\sigma_{e,jk}^2) {\sigma_{q,j}}^2 ,
%&~~~~\times \left(  N\mu+\sum_{l=1}^L \sum_{t=1}^L \beta_{jlk}^{\frac{1}{2}} \beta_{jtk}^{\frac{1}{2}}  c_{jlk}^* c_{jtk} \textbf{h}_{B,jlk}^H \textbf{h}_{B,jtk} \right),
\end{aligned}
\end{equation}
where
\begin{equation}
\begin{aligned}
\label{sigma_e}
\sigma_{e,jk}^2=& \mathbb{E}\left\{ \textbf{e}_{jk}^H \textbf{e}_{jk} \right\}\\
=&N\mu+N\sum_{l\neq j}\beta_{jlk}|c_{jlk}|^2 \\
&+\sum_{l\neq j}\sum_{t\neq j,t\neq l}\beta_{jlk}^{\frac{1}{2}}\beta_{jtk}^{\frac{1}{2}}c_{jlk}^*c_{jtk}\mathbb{E}\left\{|\textbf{h}_{B,jlk}^H\textbf{h}_{B,jtk}|\right\}.
\end{aligned}
\end{equation}
\fi

By applying the assumption of the worst-case Gaussian interference, the ergodic achievable rate of each user can be evaluated as follows
\begin{equation}
\begin{aligned}
\label{rate1}
R=\mathbb{E}\left\{\log(1+\gamma)\right\}.
\end{aligned}
\end{equation}
In most literature on massive MIMO, a concise closed-form expression of $R$ can be further achieved by applying the law of large numbers to the expression of $\gamma$.
%In most literature on massive MIMO, the ergodic rate can be further simplified by substituting the expression of $\gamma$ and then applying the law of large numbers to achieve a concise closed-form expression without the expectation operator left.
The effectiveness relies on the assumption that the dimension of the channel matrix tends large and all the channel coefficients contain a large amount of independent, and possibly identically, random components.
Here as observed in \eqref{In}, \eqref{Iq}, and \eqref{Sr} in Appendix~A, the terms involving the channel coefficients do not tend to an asymptotically deterministic value even with large $N$.
This is because the mmWave MIMO channel is sparse in general.
The sparsity makes the channel matrix, $\textbf{H}_{jlk}$ in \eqref{H}, to have only few terms and the law of large number becomes invalid.
In particular, even with an infinitely large antenna number $N$, $\textbf{H}_{jlk}$ consistently contains only two terms coming from the random angles $\varphi_{jlk}$ and $\theta_{jlk}$ in $\textbf{h}_{U,jlk}$ and $\textbf{h}_{B,jlk}$, respectively.
On the other hand, the AoA estimation error lies on the exponent term in the design of $\textbf{w}_{lk}$ and thus affects the value of $c_{jlk}$ in \eqref{c} highly nonlinearly.
Therefore, the analog beamforming gain, $|c_{jlk}|$, in \eqref{In}, \eqref{Iq}, and \eqref{Sr} is also hard to express in closed form.
Consequently, a direct analysis on \eqref{rate1} is difficult.

\subsection{Lower Rate Bound}
Since the expression of the achievable rate in \eqref{rate1} is complicated, especially the expression of $I$ in \eqref{I}, we derive a tight lower bound for the rate.
Assuming that long-term uplink power control is conducted to compensate for the large-scale fadings of different users in the same cell, the large-scale fading within each cell can be considered identical.
For simplicity, assume that the attenuations between different cells remain the same and we have
\begin{equation}
\label{beta}
\beta_{jlk}=\left\{
\begin{aligned}
&1,~~~~~~~j=l,\\
&\beta, ~~~~~~ j\neq l,
\end{aligned}
\right.
\end{equation}
where $0<\beta<1$.
We have the following theorem on the lower bound for the ergodic achievable rate.
%The derivations for the lower bound result uses the new propositions presented with proofs in Appendices A-C. The lower bound is given in the following theorem.

\begin{theorem}
\label{theorem_rate}
A lower bound for the ergodic uplink rate in \eqref{rate1} is given by
\begin{equation}
\label{R_l}
R_{LB}=\log\left[1+\frac{(1-\rho_{AD})^2P_tN^2}{P_u+P_c+P_n+P_q+P_e         }\right],
\end{equation}
where $P_u$ and $P_c$ represent the inter-user and inter-cell interferences, respectively.
$P_n$ is the AWGN and $P_q$ denotes the interference caused by ADC quantization.
$P_e$ represents the interference due to channel estimation error.
They are expressed as:
\begin{equation}
\label{Pu}
P_u=(1-\rho_{AD})^2P_t (K-1) M c^{-2}\eta_2 ,~~~~~~~~~~~~~~~~~~~~~~~~~~~~~~
\end{equation}
\begin{equation}
\label{Pc}
P_c=(1-\rho_{AD})^2 P_t (L-1)K \beta Mc^{-2}\eta_2,~~~~~~~~~~~~~~~~~~~~~~~~~~~~
\end{equation}
\begin{equation}
\begin{aligned}
\label{Pn}
P_n\!=&(1-\rho_{AD})^2\sigma_n^2c^{-4}
\left[ N c^2+N\mu \right.\\
&\left.\!+(\!L\!-\!1\!)N\beta M\!\!+\!\! (\!L\!-\!1\!)(\!L\!-\!2\!)\beta M \eta_1\!+\! 2(\!L\!-\!1\!)\beta^{\frac{1}{2}}cM^{\frac{1}{2}}\eta_1 \!\right]\!\!,
\end{aligned}
\end{equation}
\begin{equation}
\begin{aligned}
\label{Pq}
P_q=&\rho_{AD}(1-\rho_{AD})(\sigma_n^2+\lambda P_t)c^{-4}
\left[ N c^2+N\mu \right.\\
&\left.\!+(\!L\!-\!1\!)N\beta M\!\!+\!\! (\!L\!-\!1\!)(\!L\!-\!2\!)\beta M \eta_1\!+\! 2(\!L\!-\!1\!)\beta^{\frac{1}{2}}cM^{\frac{1}{2}}\eta_1 \!\right]\!\!,
\end{aligned}
\end{equation}
\begin{align}
P_e=&(1-\rho_{AD})^2P_t c^{-4}
\left[N\lambda\mu+(L-1)N^2\beta^2M^2 \right.\nonumber\\
&+2(L-1)(L-2)N\beta^2M^2\eta_1 \nonumber\\
&+2(L-1)N\left(\beta^{\frac{1}{2}}c^3M^{\frac{1}{2}}
+\beta^{\frac{3}{2}}cM^{\frac{3}{2}} \right)\eta_1 \nonumber\\
&+(L-1)K\beta M^2\eta_2+(L-1)(LK-K-1)\beta^2M^2\eta_2\nonumber\\
&+(L-1)(L-2)K\beta M^2\eta_3\nonumber\\
&+(L-1)(L-2)(LK-K-2)\beta^2 M^2\eta_3\nonumber\\
&+2(L-1)(K-1)\beta^{\frac{1}{2}}cM^{\frac{3}{2}}\eta_3\nonumber\\
&\left.+2(L-1)(LK-K-1)\beta^{\frac{3}{2}}cM^{\frac{3}{2}}\eta_3 \right].
\label{Pe}
\end{align}
\end{theorem}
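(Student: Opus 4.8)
The plan is to obtain the bound by the standard convexity (Jensen) argument that trades the expectation of a logarithm for the logarithm of an expectation. The map $t\mapsto\log(1+1/t)$ is convex on $(0,\infty)$, since its second derivative equals $(2t+1)/[t(t+1)]^2>0$. Writing $\gamma=S/I$ as in \eqref{gamma} and applying Jensen's inequality over the random angles of arrival and the induced beamforming gains therefore yields
\[
R=\mathbb{E}\left\{\log\left(1+\frac{S}{I}\right)\right\}\geq\log\left(1+\frac{1}{\mathbb{E}\{I/S\}}\right).
\]
By \eqref{Sd1} and the normalization \eqref{beta}, the desired-signal power is $S=(1-\rho_{AD})^2P_t|c_{jjk}|^4N^2$, so matching the claimed numerator $(1-\rho_{AD})^2P_tN^2$ reduces the whole statement to establishing
\[
\mathbb{E}\{I/S\}=\frac{P_u+P_c+P_n+P_q+P_e}{(1-\rho_{AD})^2P_tN^2},
\]
that is, to evaluating $\mathbb{E}\{I/|c_{jjk}|^4\}$ term by term.

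Next I would use the decomposition already set up in \eqref{I}, $I=\mathbb{E}\{|I_n|^2\}+\mathbb{E}\{|I_q|^2\}+\mathbb{E}\{|S_r|^2\}-S$, importing the three constituent power terms derived in Appendix~A. Dividing each by $S$ and taking the remaining expectation, the computation is organized around two facts. First, the user-side gains $c_{jlk}=\textbf{h}_{U,jlk}^H\textbf{w}_{lk}$ in \eqref{c} depend only on the user array responses and are statistically independent of the BS-side responses $\textbf{h}_{B,jlk}$ in \eqref{H_B}; this lets the factor $1/|c_{jjk}|^4$ inherited from $S$ separate from the rest and produces the $c^{-2}$ and $c^{-4}$ scalings, the former whenever the MRC weight contributes its matched part $c_{jjk}^*\textbf{h}_{B,jjk}^H$ and the latter whenever it contributes the error part $\textbf{e}_{jk}^H$. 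Second, the correlations $\mathbb{E}\{\textbf{h}_{B,jlk}^H\textbf{h}_{B,jli}\}$ and $\mathbb{E}\{|\textbf{h}_{B,jlk}^H\textbf{h}_{B,jli}|^2\}$ are evaluated through the asymptotic orthogonality in \emph{Lemma~\ref{lemma_ortho}}, while the residual beamforming-gain and array-response moments that admit no closed form (precisely because the quantized AoA enters $\textbf{w}_{lk}$ nonlinearly) are abbreviated by the constants $c$, $\eta_1$, $\eta_2$, $\eta_3$. With the data symbols i.i.d.\ and zero mean only the diagonal terms survive $\mathbb{E}_{\textbf{x}}$, so $\mathbb{E}\{|I_n|^2\}$ and $\mathbb{E}\{|I_q|^2\}$ collapse to $P_n$ and $P_q$, whereas the surviving part of $\mathbb{E}\{|S_r|^2\}-S$ supplies $P_u$ (same cell, $l=j$, $i\neq k$) and $P_c$ (other cells, $l\neq j$).

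The main obstacle is $P_e$. The estimation-error vector $\textbf{e}_{jk}$ in \eqref{e} superposes three mutually uncorrelated contributions, namely the inter-cell channel leakage $\sum_{l\neq j}\beta_{jlk}^{1/2}c_{jlk}\textbf{h}_{B,jlk}$, the pilot noise, and the pilot-quantization noise, and it enters both the combiner and, through $\hat{\textbf{H}}_{jj}$, the received signal itself. Expanding $\mathbb{E}\{|S_r|^2\}$ with this combiner consequently generates a large number of fourth-order cross terms over cells, users, and error components. The bookkeeping of which pairs are correlated, and hence survive the expectation, versus which vanish by independence or zero mean, followed by the evaluation of the surviving products of array-response inner products, is exactly what produces the long sum in \eqref{Pe}. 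I expect this term to carry essentially all of the technical effort, with $P_u$, $P_c$, $P_n$, and $P_q$ being comparatively direct once the independence-and-orthogonality machinery is in place.
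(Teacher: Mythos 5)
Your skeleton (decompose $I$ as in \eqref{I}, treat the noise, quantization, and signal terms separately, apply Jensen, and finish with moment lemmas) does follow the paper's route, but the step you make the centerpiece is stated in a way that fails. You claim the theorem reduces to establishing the \emph{equality} $\mathbb{E}\{I/S\}=(P_u+P_c+P_n+P_q+P_e)/[(1-\rho_{AD})^2P_tN^2]$, i.e.\ to evaluating $\mathbb{E}\{I/|c_{jjk}|^4\}$ exactly. That equality is false and, as you yourself observe, unobtainable: the moments of the quantized-AoA beamforming gain (e.g.\ $\mathbb{E}\{|c_{jjk}|^{-2}\}$, $\mathbb{E}\{|c_{jji}|^{2}/|c_{jjk}|^{2}\}$) admit no closed form, and independence of the user-side and BS-side angles does not rescue this --- factorizing the expectation still leaves those intractable user-side moments behind. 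What the paper actually does, and what your argument needs, is a \emph{pointwise upper bound} on $I/|c_{jjk}|^4$ built from \emph{Lemma~\ref{lemma_c}}: every desired-gain factor is replaced by its deterministic lower bound $|c_{jjk}|\geq c$ from \eqref{C3}, and every interfering gain by its upper bound $|c_{jli}|\leq\sqrt{M}$; this is precisely where the $c^{-2}$, $c^{-4}$, $M$, $M^{3/2}$, $M^2$ factors in \eqref{Pu}--\eqref{Pe} originate (see \eqref{I_UBn}, \eqref{I_UBq}, \eqref{S_UBr}, \eqref{gamma2}, \eqref{bound1}). The inequality direction is what makes the proof close: because $t\mapsto\log(1+1/t)$ is decreasing, an \emph{upper} bound on $\mathbb{E}\{I/S\}$ still yields a \emph{lower} bound on the rate. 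So the $P$-terms are expectations of upper bounds, not exact interference powers; your ``$=$'' must become ``$\leq$'', and the Lemma-1 substitutions must be made explicitly and in the correct direction term by term.

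The second gap is in how you plan to take the remaining expectations. You attribute the evaluation of $\mathbb{E}\{\textbf{h}_{B,jlk}^H\textbf{h}_{B,j'l'k'}\}$ and $\mathbb{E}\{|\textbf{h}_{B,jlk}^H\textbf{h}_{B,j'l'k'}|^2\}$ to the asymptotic orthogonality of \emph{Lemma~\ref{lemma_ortho}}, and you describe $\eta_1,\eta_2,\eta_3$ as mere abbreviations for moments with no closed form. Both points are backwards. Orthogonality only says $\eta_1/N\rightarrow 0$; invoking it here would set the very quantities that populate the theorem to zero --- note that $P_u$ and $P_c$ in \eqref{Pu}--\eqref{Pc} consist \emph{entirely} of $\eta_2$-terms, with $\eta_2=\Theta(N\log N)$, so ``orthogonality'' would erase the inter-user and inter-cell interference altogether and the resulting expression would no longer be a bound your argument establishes. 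The correct tools are \emph{Lemma~\ref{lemma_eta1}}--\emph{Lemma~\ref{lemma_eta3}}, which give explicit asymptotic closed forms for these moments via Bessel-function integrals, including the triple-product moment $\mathbb{E}\{\textbf{h}_{B,jlk}^H\textbf{h}_{B,j'l'k'}\textbf{h}_{B,j'l'k'}^H\textbf{h}_{B,j''l''k''}\}\rightarrow\eta_3$ that unavoidably appears when you expand $\mathbb{E}\{|S_r|^2\}$ as in \eqref{Sr2} (your ``$P_e$ bookkeeping'') and that your plan never identifies as needing its own lemma.
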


\begin{proof}
See Appendix~E.
\end{proof}

Due to the effect of pilot contamination, the uplink rate converges to a constant with the antenna number increasing to infinity, i.e., $N\rightarrow\infty$.
From \eqref{R_l}, we have
\begin{equation}
\begin{aligned}
\label{R_l_infty}
R_{LB}\rightarrow\log\left[1+\frac{c^4}{(L-1)\beta^2M^2}\right]
%\\
%&\overset{(b)}=\log\left[1+\frac{\textrm{sinc}^4 \left(\frac{M}{2}\pi \zeta\right)}{(L-1)\beta^2}\right]
,
\end{aligned}
\end{equation}
where we utilize the facts that %$\frac{\eta_1}{N}\rightarrow0$, $\frac{\eta_2}{N^2}\rightarrow0$, and $\frac{\eta_3}{N^2}\rightarrow0$ with $N\rightarrow\infty$.
\begin{align}
\frac{\eta_1}{N}=\frac{1}{N}+\frac{\ln N+a}{\pi^2N} \rightarrow 0,
\label{Eta1}
\end{align}
\begin{align}
\frac{\eta_2}{N^2}=\frac{1}{N}-\frac{2}{\pi^2}\left(\frac{1}{N}-\frac{1}{N^2}\right)+\frac{2}{\pi^2}\left(\frac{\ln N}{N} +\frac{a}{N}\right) \rightarrow 0,
\end{align}
and
\begin{align}
\frac{\eta_3}{N^2}&=\frac{\eta_1}{N^2}+\frac{2}{N^2}\sum\limits_{m=1}^{N-1}\sum\limits_{n=0}^{N-m-1} J_0(m\pi)J_0(n\pi)J_0((n+m)\pi)
\nonumber\\
&\overset{(a)}<\frac{\eta_1}{N^2}+\frac{2}{N^2}\sum\limits_{m=1}^{N-1}\sum\limits_{n=0}^{N-m-1} J_0^2(m\pi)
\nonumber\\
&=\frac{\eta_1}{N^2}+\frac{2}{N^2}\sum\limits_{m=1}^{N-1} (N-m) J_0^2(m\pi)
\nonumber\\
&<\frac{\eta_1}{N^2}+\frac{2(N-1)}{N^2}\sum\limits_{m=1}^{N-1}  J_0^2(m\pi)
\nonumber\\
&\overset{(b)}=\frac{\eta_1}{N^2}+\frac{2(N-1)}{N^2\pi^2}(\ln N +a)
\nonumber\\
&\overset{(c)}\rightarrow 0,
\label{Eta3_3}
\end{align}
where $(a)$ use the property of the Bessel function that $J_0(0)=1$, $J_0(n\pi)<1$ and $J_0((n+m)\pi)<J_0(m\pi)$ for $n>0$ \cite{handbook},
$(b)$ comes from $\sum\limits_{m=1}^{N-1}  J_0^2(m\pi)\rightarrow \frac{1}{\pi^2}(\ln N+a)$ as indicated in \eqref{eta1},
and $(c)$ utilizes \eqref{Eta1}.
Further considering $\eta_3>0$, the result in \eqref{Eta3_3} implies $\frac{\eta_3}{N^2}\rightarrow 0$.
From \eqref{R_l_infty}, the desired signal is interfered by signals from other $L-1$ cells with large-scaling fading $\beta$ due to pilot reuse.
$c^4$ is a lower bound for the analog beamforming gain at the user side while $M^2$ represents an upper bound for beamforming gain from other cells.
Note that the asymptotic SIQNR is not affected by the ADC distortion factor $\rho_{AD}$.
It is because that the dominating interference caused by pilot contamination is quantized as well as the desired signal.

\section{Rate Analysis for Single-Cell Scenario}
%Note that there have been many literature working on the problem of pilot contamination in the massive MIMO multi-cell network \cite{mul_cell}.
Pilot contamination suppression has been widely investigated in literature, e.g., in \cite{mul_cell}.
This section then pays attention to the performance of a single-cell network where pilot contamination is temporarily assumed well suppressed.
By setting $L=1$ in \emph{Theorem~\ref{theorem_rate}}, we obtain the lower bound for the achievable data rate in a single-cell network as in \eqref{R_l_sig} at the top of the next page, where $\gamma_t\triangleq\frac{P_t}{\sigma_n^2}$ and $\gamma_p\triangleq\frac{P_p}{\sigma_n^2}$ are the uplink data and pilot SNRs, respectively.
Obviously, BS antenna number $N$, user antenna number $M$, ADC distortion factor $\rho_{AD}$, data SNR $\gamma_t$, and pilot SNR $\gamma_p$ contribute differently to the achievable rate.
In addition, the rate decreases with increasing $K$ since more users cause more pronounced multiuser interference.
Due to the large frequency bandwidth in mmWave communications and the use of massive MIMO, low SNR is able to provide satisfactory data transmission rate \cite{low_SNR1} \cite{low_SNR2}.
%Note that acceptable channel capacity can be achieved at low SNR in mmWave communication with large frequency bandwidth.
In the following, we therefore focus on low SNR scenarios, which is of common interest in mmWave massive MIMO applications.

\newcounter{TempEqCnt}
\setcounter{TempEqCnt}{\value{equation}}
\setcounter{equation}{42}
\begin{figure*}[tb]
 \begin{equation}
\label{R_l_sig}
R_{LB,s}\!=\!\log\!\left[\!1\!+\!\frac{(1-\rho_{AD})^2N^2}{ \frac{c^{-4}N}{\gamma_t \gamma_p}\!+\!c^{-2}N\left(\!\frac{1-\rho_{AD}+\rho_{AD}c^{-2}\lambda/\tau}{\gamma_t}\!+\!\frac{c^{-2}\lambda}{\gamma_p}\!\right) \!+\!\left(\!1\!-\!\rho_{AD}\!+\!\frac{c^{-2}\lambda}{\tau}\!\right)\rho_{AD}Nc^{-2}\lambda\!+\!      (\!1\!-\!\rho_{AD})^2M(K\!-\!1) c^{-2}\eta_2}\!\right].
\end{equation}
\rule[0pt]{18.1cm}{0.05em}
\end{figure*}
\setcounter{equation}{\value{TempEqCnt}}
\setcounter{equation}{43}

\subsection{Imperfect CSI with Low Pilot SNR}
First, we consider the case with low data and pilot SNRs, i.e., $\gamma_t\ll1$ and $\gamma_p\ll1$.
Under this condition, the lower bound in \eqref{R_l_sig} can be further simplified as
\begin{align}
R_{LB,1}&\overset{(a)}=\log\left[1+(1-\rho_{AD})^2NM^2\textrm{sinc}^4 \left(\frac{M}{2}\pi \zeta\right)\gamma_p \gamma_t\right]\nonumber\\
&\overset{(b)}\approx\log\left[1+(1-\rho_{AD})^2NM^2\gamma_p\gamma_t\right]\nonumber\\
&\triangleq\log\left(1+\xi_1\gamma_t\right),
\label{Rl_E1}
\end{align}
where $(a)$ follows by substituting the expression of $c$ in \emph{Lemma~\ref{lemma_c}} and applying the assumption that $\gamma_t\ll1$ and $\gamma_p\ll1$, and $(b)$ comes from the fact that $\textrm{sinc}^4 \left(\frac{M}{2}\pi \zeta\right)\approx 1$ when the analog beamforming interval $\zeta$ is small enough.
It is obvious that the SIQNR is a scaled value of the data SNR $\gamma_t$ by a factor
\begin{equation}
\begin{aligned}
\label{xi_1}
\xi_1\triangleq(1-\rho_{AD})\times NM\times (1-\rho_{AD})M\gamma_p,
\end{aligned}
\end{equation}
where $(1-\rho_{AD})$ represents the SNR attenuation due to the low-precision ADC quantization to the received data signals, and $NM$ represents the beamforming gain at both the BS and user sides.
Factor $(1-\rho_{AD})M\gamma_p$ represents the SNR attenuation due to the channel estimation error.
Specifically, the channel estimation error is mainly caused by AWGN with $\gamma_p\ll1$ and the pilot quantization error from low-precision ADCs, while the analog beamforming at user side improves the estimation accuracy by $M$.
%Factor $(1-\rho_{AD})M\gamma_p$ represents the SNR attenuation due to the channel estimation error, which is mainly caused by AWGN with $\gamma_p\ll1$ and the pilot quantization error from low-precision ADCs, while the analog beamforming at user side improves the estimation accuracy by $M$.

%\textbf{Remark 1.}

From the above discussion, we have the following important remarks:

1)
From \eqref{Rl_E1}, the achievable rate per user is independent of user number $K$.
This is because that the channel estimation error is mainly caused by AWGN under the assumption of $\gamma_p\ll1$, which overwhelms the effect of multiuser pilot interference.
When transmitting data with $\gamma_t\ll1$, the inter-user interference is negligibly small compared to the thermal noise and the interference caused by imperfect CSI.
In this condition, a large user number hardly degrades the achievable rate of each user.

2)
Expression \eqref{Rl_E1} explicitly characterizing the relationship between increasing the antenna number and the reduction in $\gamma_p$ and $\gamma_t$.
\emph{In particular, a $3$ dB reduction in data or pilot SNR needs doubling the BS antennas, or alternatively increasing user antennas by $\sqrt{2}$ times, in order to maintain the same rate at a low SNR.}
Therefore, increasing antenna number at the user side is more efficient than that at the BS.
%This is because that the user antennas contributes to both the channel estimation and data transmission while more BS antennas can not improve the channel estimation accuracy.
 %since the channel dimension increases with $N$ as well.
However, in practice, the number of antennas at the user side is more tightly restricted by the size of terminals than that at BS.

3)
For fixed $\gamma_t$, $R_{LB,1}$ remains the same if $\xi_1$ in \eqref{xi_1} keeps as a constant.
\emph{
More antennas or higher pilot power can compensate for the rate loss caused by low-precision ADCs.}
According to typical values of $\rho_{AD}$ \cite{rho}, the BS needs $2.5$ times receiving antennas when ADC resolution $b$ decreases from $5$ to $1$, in order to maintain the same rate.
In particular at a low SNR, employing $N=32~(64,~96)$ antennas with 5-bit ADCs at the BS achieves the same rate as using $N=80~(160,~240)$ antennas with 1-bit ADCs, which is also verified by numerical results in Section~\uppercase\expandafter{\romannumeral6}.B.

\iffalse
4)
For moderate to high-resolution quantizations, substitute \eqref{rho} to \eqref{Rl_E1} and we have
\begin{equation}
\begin{aligned}
\label{Rl_E2}
2^{R_{LB,1}}-1\approx NM^2\gamma_p\gamma_t e^{-\sqrt{3}\pi2^{-2b}},
\end{aligned}
\end{equation}
where we use $(1+x)^{\frac{1}{x}}\approx e$ with $x\ll 1$.
For large $R_{LB,1}$ with $2^{R_{LB,1}}\gg1$, we further have
\begin{equation}
\begin{aligned}
\label{Rl_E3}
{R_{LB,1}} \approx \log \left(NM^2\gamma_p\gamma_t\right)-\frac{\sqrt{3}\pi}{\ln2}2^{-2b}.
\end{aligned}
\end{equation}
Note that $N$, $M$, $\gamma_p$, and $\gamma_t$ improves the rate logarithmically while $b$ does this exponentially.
\fi

%In the following, we consider two low SNR scenarios, i.e., $\gamma_t=\frac{P_t}{\sigma_n^2}\ll1$, which is the common condition in practical massive MIMO.
\subsection{Imperfect CSI with ADC Quantization Error}
In order to improve the accuracy of channel estimation, the pilot power may be set higher than the data transmit power in applications.
Here we assume that $\gamma_p\gg1$ to clearly see the impact of the low-precision ADCs on channel estimation.
In this case, the lower bound in \eqref{R_l_sig} approximately equals
\begin{align}
R_{LB,2}&\overset{(a)}=\log\left[1+\frac{(1-\rho_{AD})^2Nc^2\gamma_t}{1-\rho_{AD}+\rho_{AD}c^{-2}\lambda/\tau}\right]\nonumber\\
&\overset{(b)}\approx\log\left[1+\frac{(1-\rho_{AD})^2}{1-\rho_{AD}+\frac{K}{\tau}\rho_{AD}}NM\gamma_t\right]\nonumber\\
&\triangleq\log\left(1+\xi_2\gamma_t\right),
\label{Rl_H1}
\end{align}
where $(a)$ comes from the assumption that $\gamma_t\ll1$ and $\gamma_p\gg1$.
$(b)$ substitutes the definitions of $c$ and $\lambda$ in \eqref{C3} and \eqref{lambda}, respectively, and uses the approximation $\textrm{sinc}^2 \left(\frac{M}{2}\pi \zeta\right)\approx 1$ for small $\zeta$. The scaling factor is defined as
\begin{align}
\xi_2&\triangleq(1-\rho_{AD}) \times NM \times \frac{1-\rho_{AD}}{1-\rho_{AD}+\frac{K}{\tau}\rho_{AD}}\nonumber\\
&= \frac{1}{M\gamma_p\left(1-\rho_{AD}+\frac{K}{\tau}\rho_{AD}\right)} \xi_1,
\label{xi_2}
\end{align}
which shares some similarities as in \eqref{xi_1}. The factor $(1-\rho_{AD})$ represents the SNR attenuation caused by low-precision ADCs, and $NM$ represents the array gain obtained by beamforming.
The difference between $\xi_2$ and $\xi_1$ is the factor $\frac{1-\rho_{AD}}{1-\rho_{AD}+\frac{K}{\tau}\rho_{AD}}$, representing the SNR attenuation due to different channel estimation qualities.
%Different from $\xi_1$ in \eqref{xi_1}, the channel estimation error is mainly caused by low-precision ADC quantization under the assumption that $\gamma_p\gg1$.

%\textbf{Remark 2.}
Based on the above result, we have the following remarks:

1)
Comparing $\xi_2$ in \eqref{xi_2} with $\xi_1$ in \eqref{xi_1}, the difference lies in the last multiplicative term because the dominating factors for the imperfect CSI are different.
In channel estimation, multiuser interference exists because the received pilot signals are quantized by low-precision ADCs.
This quantization operation, to some extent, breaks the orthogonality among pilots from different users in $\bm{\Psi}$.
Under the assumption of $\gamma_p\gg1$, the channel estimation error due to ADC quantization, instead of AWGN, becomes dominating.
On one hand, the channel estimation error decreases with $\tau$ because longer pilot improves the channel estimation accuracy.
On the other hand, a larger $K$ yields more channel estimation error and consequently leads to a lower rate.
For a specific choice of $\tau=K$, the term $\frac{1-\rho_{AD}}{1-\rho_{AD}+\frac{K}{\tau}\rho_{AD}}$ reduces to $1-\rho_{AD}$, which becomes independent of $K$.
This is because the channel estimation error caused by ADC quantization no longer relies on $K$ when the pilot length $\tau$ changes with $K$ simultaneously.

2)
It is obvious that $R_{LB,2}$ remains the same if $\xi_2\gamma_t$ keeps a constant.
On one hand, more antennas can compensate for the reduction in rate with decreased transmit power.
For example, doubling BS antennas, or user antennas, can achieve the same rate with 3 dB lower transmit power.
On the other hand, \emph{the numbers of BS and user antennas can compensate for each other under the constraint that $NM$ remains a constant}.
%In particular, a half number of BS antennas matches twice user antenna number in order to maintain the same rate at low SNR.

\iffalse
3)
According to \eqref{Rl_H1}, increasing $N$ provides larger rate than increasing $M$ because $\textrm{sinc}^2 \left(\frac{M}{2}\pi \zeta\right)$ is always slightly smaller than $1$ and it decreases with increasing $M$. It implies that it is more efficient to increase BS antenna number than that at users.
This is because digital beamforming utilized at the BS side is more efficient than the analog beamforming conducted at the user side using a quantized phase shifter introduced in  Section~\uppercase\expandafter{\romannumeral3}-A.
Note that the accuracy of channel estimation improved by analog beamforming at the user side is negligible under the assumption of $\gamma_p\gg1$ since the pilot quantization loss due to ADC at the BS limits the estimate performance.
%Note that $R_l$ is independent of user number $K$. This is because inter-user interference is not dominating with low SNR. Larger $K$ will increase the pilot overhead due to $\tau=K$ and decrease the achievable rate of each user in a total coherent time slot.
\fi

3)
In order to obtain the effective CSI, the required pilot length is under the constraint that $\tau\geq K$.
Then, we have $\xi_2\geq (1-\rho_{AD})^2 NM$ according to \eqref{xi_2}. As for $\xi_1$ in \eqref{xi_1}, we have $\xi_1\leq (1-\rho_{AD})^2 NM$ requiring $M\gamma_p\leq 1$ under the assumption that $\gamma_p \ll 1$.
Thus, we have
\begin{equation}
\begin{aligned}
\label{xi}
\xi_2\geq \xi_1,
\end{aligned}
\end{equation}
which is reasonable since the high pilot SNR always provides better rate performance than the low SNR case.
%Note that all the above insights observed from a special case of single cell are also valid for multi-cell scenarios, which will be verified in the next section.
Note that all the above insights observed from a single-cell condition are also valid for multi-cell scenarios, which is verified in the next section.

\iffalse
4)
For $2^{R_{LB,2}}\gg 1$ and $b\gg3$ and $\tau=K$, we have
\begin{equation}
\begin{aligned}
\label{Rl_H3}
{R_{LB,2}} \approx \log \left(NM\gamma_t\right)-\frac{\sqrt{3}\pi}{\ln2}2^{-2b},
\end{aligned}
\end{equation}
through similar operations as in \eqref{Rl_E2} and \eqref{Rl_E3}.
Obviously, $N$, $M$, and $\gamma_t$ improves the rate logarithmically while $b$ exponentially.
\fi

Please note that the observations and derivation results in this paper are based on a common assumption of perfect synchronization in frequency domain.
In general, this can be achieved by using existing synchronization techniques \cite{CFO_MIMO1}-\cite{CFO_MIMO3}.
While recent works \cite{CFO1}-\cite{CFO3} have shown that frequency synchronization is a challenging issue for implementation in massive MIMO due to prohibitively increasing complexity with a large antenna number.
In \cite{CFO1}, a constant envelope pilot signal based carrier frequency offset (CFO) estimation has been proposed for massive MIMO systems.
A blind frequency synchronization method for multiuser massive MIMO uplink transmission has been presented in \cite{CFO2}.
By exploiting the angle information of users, a new frequency synchronization scheme has been designed in \cite{CFO3}.
These recently proposed synchronization methods can be applied in mmWave massive MIMO networks to guarantee that our assumption makes sense.
%In future work, we can further study the impact of CFO on mmWave massive MIMO networks using low-precision ADCs and limited RF chains.
%The result would be of more interest in better modeling potential impacts of various practical impairments in massive MIMO applications.

\section{Simulation Results}
In this section, we verify the derived lower rate bound in \eqref{R_l} by numerical examples and test the effect of various system parameters on the rate performance.
%We will see that all the observations obtained under the special case of a single cell at low SNRs are also valid for multi-cell conditions.
In the following, the inter-cell distortion factor $\beta$ is set to be $0.1$ for moderate distance between adjacent cells \cite{mul_cell}.
The phase shifter resolution is set to $B=6$, which has been shown accurate enough in practice \cite{CE1}.
In order to reduce the pilot overhead, we set $\tau=K$ under the constraint of $\tau\geq K$, unless otherwise specified.

\subsection{Lower Rate Bound Verifications}
\begin{figure}[tb]
\centering\includegraphics[width=0.51\textwidth,bb=20 220 580 620]{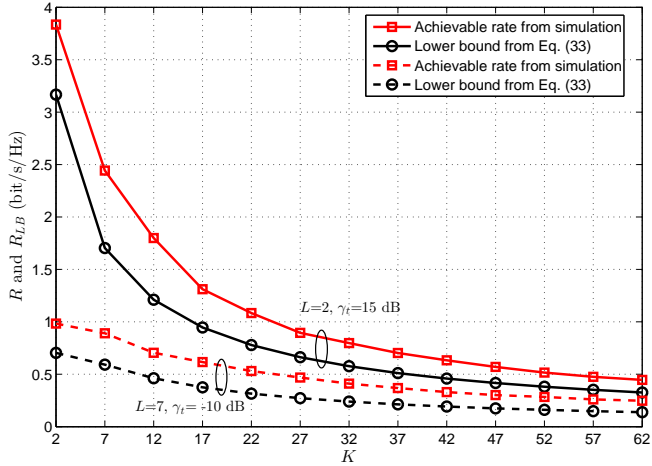}
\caption{The lower bound for uplink achievable rate versus the number of users.}
\label{fig_bound_ul}
\end{figure}

Fig. \ref{fig_bound_ul} compares the uplink achievable rate and the lower bound in \eqref{R_l}.
We set $N=64$, $M=2$, and $\gamma_p=\tau \gamma_t$. 1-bit ADCs are adopted.
From this figure, the achievable rate first decreases and then converges to a constant with increasing $K$.
This is because the interference caused by channel estimation error dominates, overwhelming the inter-user interference even with a large user number.
In general, our derived bound is tight with user number $K$ ranging from $2$ to $62$.
Moreover, the bound tends tighter with increasing $K$ due to the use of Jensen's inequality.
%Note that the rate decreases with $K$, although the assumption $\gamma_p \gg 1$ is not satisfied.

\iffalse
\begin{figure}[tb]
\centering\includegraphics[width=0.51\textwidth,bb=20 220 580 620]{DL_L3_N64_M4_1201.eps}
\caption{The lower bound for downlink achievable rate.}
\label{fig_bound_dl}
\end{figure}

Fig. \ref{fig_bound_dl} tests the lower bound in \eqref{R_l_dl} to the downlink achievable rate, where we set $M=4$, $\gamma_t=10$, $\gamma_p=\tau \gamma_t$ where $\tau=64$, using 1-bit DACs.
Similar to uplink condition, the bound tends tight with $K$ increasing except for single user case.
Besides, the rate decreases with $K$ due to inter-user interference.
\fi

\subsection{Imperfect CSI with Low Pilot SNR}

\begin{figure}[tb]
\centering\includegraphics[width=0.51\textwidth,bb=20 220 580 620]{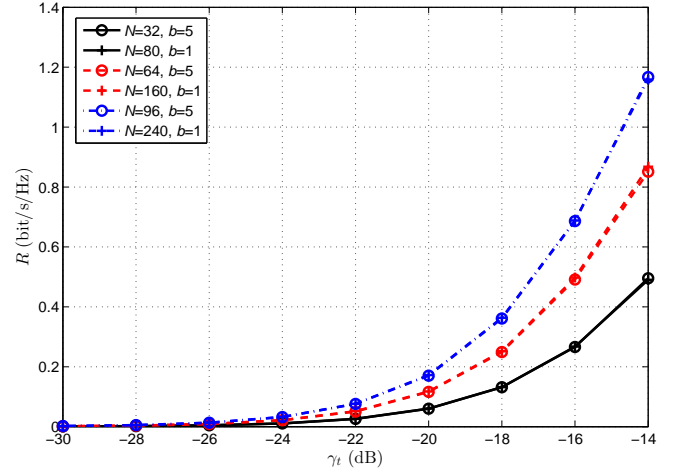}
\caption{Achievable rate versus data SNR with various BS antenna numbers and ADC precisions.}
\label{fig_b}
\end{figure}

\begin{figure}[tb]
\centering\includegraphics[width=0.51\textwidth,bb=20 220 580 620]{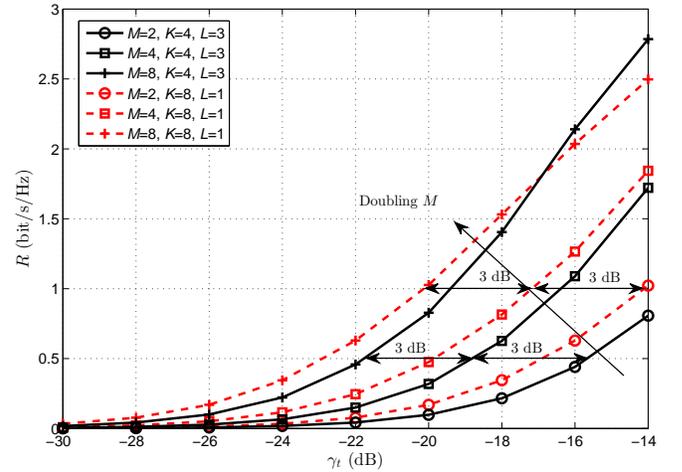}
\caption{Achievable rate versus data SNR with various user antenna numbers.}
\label{fig_M}
\end{figure}

We show the achievable rates from numerical results under the assumptions that $\gamma_t\ll1$ and $\gamma_p\ll1$.
Fig. \ref{fig_b} displays the achievable rates with different ADC precisions and BS antenna numbers. We set $L=7$, $K=4$, $M=2$, and $\gamma_p=\tau \gamma_t$.
We observe that $2.5$ times antennas at BS achieves almost the same rate when ADC precision $b$ decreases from $5$ to $1$.
This implies that more receiving antennas can effectively compensate for low-precision ADC quantization distortion.

Fig. \ref{fig_M} shows the achievable rate versus data SNR $\gamma_t$ with user antenna numbers $M=2, 4,$ and $8$, using 1-bit ADCs. We set $\gamma_p=\tau \gamma_t$ and $N=128$.
For the single-cell condition with $L=1$, doubling $M$ can trade for a reduction in both $\gamma_t$ and $\gamma_p$ by $3$ dB, as we have mentioned before.
This implies that adding antennas at user side can compensate for the SNR reduction.
While for the multi-cell case with $L=3$, similar observations can be obtained.
%Both multi- and single-cell conditions are displayed and the observations are similar.

\subsection{Imperfect CSI with ADC Quantization Error}

%It verifies that observations in Remark 2 apply to the multi-cell condition as well.

\begin{figure}[tb]
\centering\includegraphics[width=0.51\textwidth,bb=20 220 580 620]{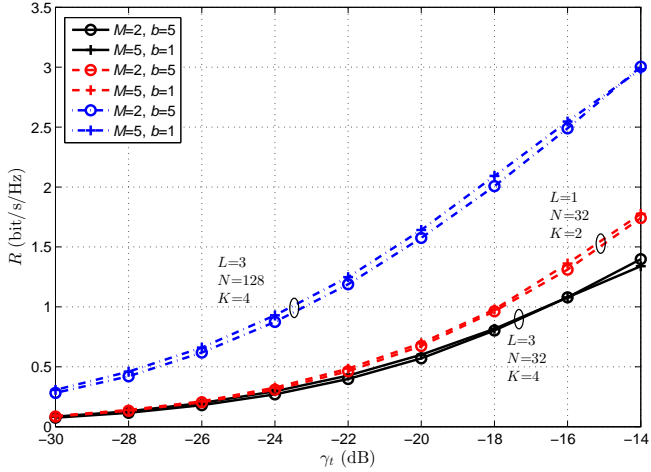}
\caption{Achievable rate versus data SNR with various user antenna numbers and ADC precisions. }
\label{fig_M_b}
\end{figure}

\begin{figure}[tb]
\centering\includegraphics[width=0.51\textwidth,bb=20 220 580 620]{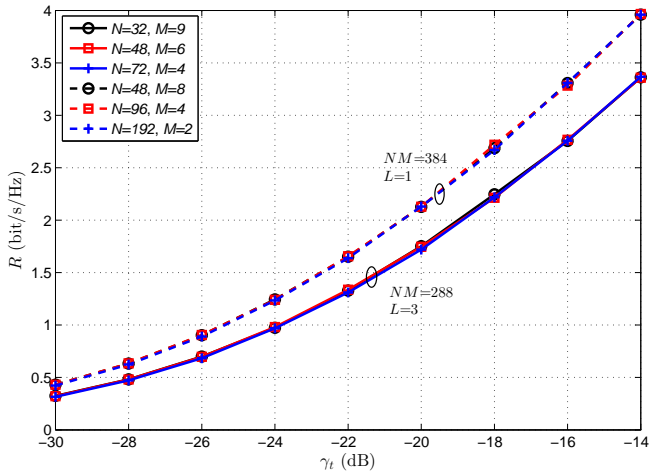}
\caption{Achievable rate with keeping the value of $NM$ as a constant. }
\label{fig_NM}
\end{figure}

\begin{figure}[tb]
\centering\includegraphics[width=0.51\textwidth,bb=20 220 580 620]{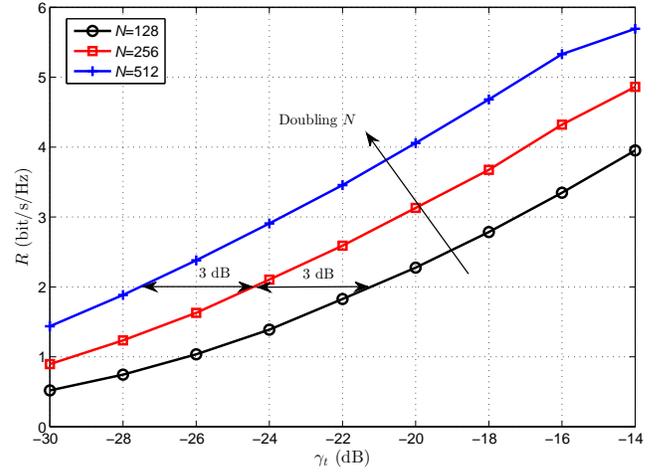}
\caption{Achievable rate versus data SNR with various BS antenna numbers.}
\label{fig_N}
\end{figure}

In the following, we show the achievable rates by numerical simulations under a low data SNR but with high pilot power, i.e., $\gamma_t\ll1, \gamma_p\gg1$.
Fig. \ref{fig_M_b} shows the achievable rate with various user antenna numbers and ADC precisions, under both single- and multi-cell conditions.
The pilot SNR is set as $\gamma_p=10$~dB.
It can be observed that 2.5 times more user antennas can approximately provide the same rate with the number of the ADC quantized bits decreasing from 5 to 1.
This is because under these two scenarios, the SNR scaling factor $\xi_2$ remains a constant with $\tau=K$ in \eqref{xi_2}.
It implies that adding antennas at the user side can also compensate for the low-precision quantization distortions at the BS.

Fig. \ref{fig_NM} displays the achievable rate with $NM$ maintaining as a constant. The parameters are set as $K=2$ and $\gamma_p=10$~dB. 3-bit ADCs are exploited.
For the single-cell case, i.e., $L=1$, we set $NM=384$ while for the multi-cell case with $L=3$, we set $NM=288$.
From this figure, the rates remain the same when keeping $NM$ as a constant under low SNRs.
It implies that adding antennas at the user side can compensate for the lack of antennas at BS side, and vice versa.
%Moreover, adding BS antennas is slightly more efficient for high SNRs because digital beamforming utilized at the BS side is more efficient than the analog beamforming conducted at the user side using a quantized phase shifter introduced in Section~\uppercase\expandafter{\romannumeral3}-A.
%Note that the accuracy of channel estimation improved by analog beamforming at the user side is negligible under the assumption of $\gamma_p\gg1$ since the pilot quantization loss due to ADC at the BS limits the estimate performance.

Fig. \ref{fig_N} shows the achievable rate versus data SNR $\gamma_t$ with various BS antenna numbers $N=128, 256, 512$, equipping 3-bit ADCs.
We set that $L=3$, $\gamma_p=10$ dB, $K=4$, and $M=4$.
We observe that doubling $N$ can approximately compensate for the rate loss due to a $3$ dB reduction in $\gamma_t$, as indicated before.
More antennas at BS can compensate for the SNR reduction.

\subsection{Comparison Between Low and High Pilot SNRs}

\begin{figure}[tb]
\centering\includegraphics[width=0.51\textwidth,bb=20 220 580 620]{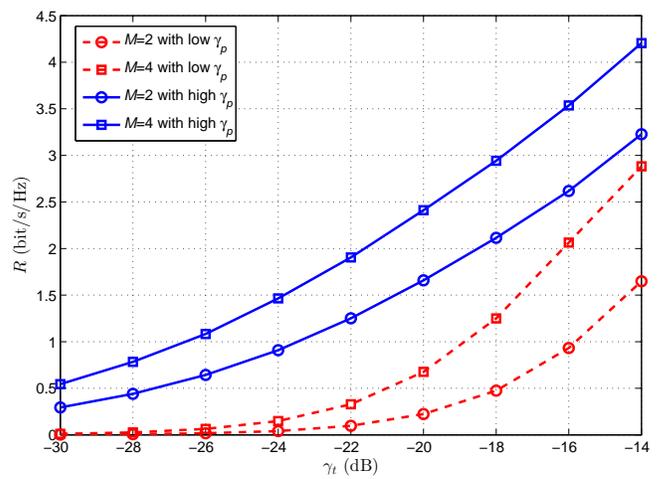}
\caption{Achievable rate versus data SNR for low and high pilot powers.}
\label{fig_comp}
\end{figure}

\begin{figure}[tb]
\centering\includegraphics[width=0.51\textwidth,bb=20 220 580 620]{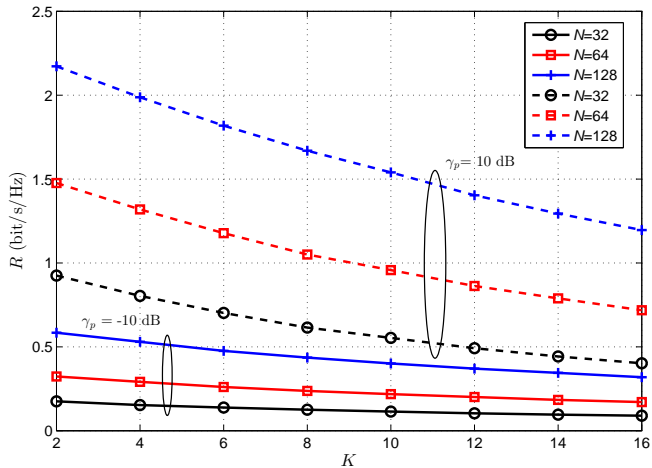}
\caption{Achievable rate versus user number.}
\label{fig_K}
\end{figure}

Fig. 8 compares the achievable rates for low and high pilot powers, i.e., $\gamma_p=\tau \gamma_t$ and $\gamma_p=10$~dB.
We set $L=1$, $K=4$, and $N=128$, and use 3-bit ADCs.
It is observed that the achievable rate of lower pilot power increases more significantly with increasing $M$ than that of higher pilot power.
Under the scenario of small $\gamma_p$, it is therefore more efficient to increase $M$ for performance improvement.

Fig. \ref{fig_K} shows the achievable rate versus user number with a low SNR, i.e., $\gamma_t=-15$ dB, using 3-bit ADCs.
We select $K$ ranging from $2$ to $16$ and choose a fixed $\tau=16$ guaranteeing $\tau\geq K$.
We set $L=3$, $M=2$ and compare the high pilot power scenario, i.e., $\gamma_p=10$ dB, with the low pilot power condition, i.e., $\gamma_p=-10$ dB.
For high pilot SNR, the rates decrease with user number $K$ increasing from $2$ to $16$.
This is because the multiuser interference in channel estimation increases with $K$ due to the low-precision ADC quantization.
While for low pilot SNR, the rates are approximately independent of $K$ since the channel estimation error is mainly caused by channel noise as indicated before.
It implies that adding users will not cause more rate loss with both low data and pilot SNRs.

\section{Conclusion}

In this work, we consider a multi-cell mmWave networks using large antenna arrays. The BS equips low-precision ADCs while each multi-antenna user is driven by a single RF chain.
Considering the ADC quantization distortion and the analog beamforming gain, we analyze the uplink achievable rate with imperfect CSI.
Furthermore, a tight lower bound for the user rate is derived.
Specially, we focus on a single-cell case and find that the received SIQNR can be expressed as a scaling value of the original low SNR.
The scaling factor is proportional to $NM^2$ with low pilot SNR while proportional to $NM$ with high pilot SNR, in which case the channel estimation error is mainly caused by ADC quantization.
The system parameters, including the antenna numbers at both the BS and user sides, the ADC precision, and the data and pilot SNRs, can be adjusted in order to balance the rate performance.
A mixed-ADC architecture under more general channel models, e.g., Rician fading channels, could be studied in future work.
%Besides, the observations are verified applicable for multi-cell systems.

%\appendix
\begin{appendices}
\section{Derivation of terms in \eqref{I}}
In this appendix, we derive the first three expectation terms in \eqref{I} one by one.
From \eqref{H_est}, we first give the expression of the estimated channel from user $k$ in cell $j$ to BS $j$ as
\begin{align}
&\frac{\hat{\textbf{h}}_{jjk}}{g_{jk}}
\nonumber
\\
=&\beta_{jjk}^\frac{1}{2}c_{jjk}\textbf{h}_{B,jjk}+\textbf{e}_{jk}
\nonumber
\\
=&\sum_{l=1}^L \beta_{jlk}^{\frac{1}{2}}c_{jlk}\textbf{h}_{B,jlk}+
\underbrace{
\frac{1}{\!\sqrt{ P_p}}\textbf{n}_{p,j}\bm{\phi}_k^*
\!+\!\frac{1}{(\!1\!-\!\rho_{AD}\!)\!\sqrt{ P_p}}\textbf{n}_{pq,j}\bm{\phi}_k^*
}_{\tilde{\textbf{n}}}
,
\label{CE_1}
\end{align}
where $\tilde{\textbf{n}}\sim\mathcal{CN}(\mathbf{0},\mu_j\textbf{I}_N)$ is defined as an equivalent estimation noise vector with $\mu_j$ defined in \eqref{mu}, including thermal and ADC quantization noise.
Note that pilot contamination exists in \eqref{CE_1} as the channel vectors from users in other cells are also contained in the estimate.

From \eqref{yd_MRC_k}, the interference power caused by channel AWGN can be expressed as
\begin{align}
&\mathbb{E} \left\{ |I_n|^2  \right\}
\nonumber
\\
=&(1-\rho_{AD})^2 \mathbb{E} \left\{
\left|\left(\beta_{jjk}^\frac{1}{2}c_{jjk}^*\textbf{h}_{B,jjk}^H+\textbf{e}_{jk}^H\right)\textbf{n}_j\right|^2
\right\}
\nonumber
\\
\overset{(a)}=&(1-\rho_{AD})^2\mathbb{E} \left\{
\left(\sum_{l=1}^L\beta_{jlk}^{\frac{1}{2}}c_{jlk}^*\textbf{h}_{B,jlk}^H+\tilde{\textbf{n}}^H\right)\textbf{n}_j\textbf{n}_j^H
\right.
\nonumber
\\
&\left.~~~~~~~~~~~~~~~~~~~~~~~~~\times
\left(\sum_{l=1}^L \beta_{jlk}^{\frac{1}{2}}c_{jlk}\textbf{h}_{B,jlk}+ \tilde{\textbf{n}}\right)
\right\}
\nonumber
\\
\overset{(b)}=&(1-\rho_{AD})^2\sigma_n^2\mathbb{E} \left\{
\left(\sum_{l=1}^L\beta_{jlk}^{\frac{1}{2}}c_{jlk}^*\textbf{h}_{B,jlk}^H+ \tilde{\textbf{n}}^H\right)
\right.
\nonumber
\\
&\left.~~~~~~~~~~~~~~~~~~~~~~~~\times
\left(\sum_{l=1}^L \beta_{jlk}^{\frac{1}{2}}c_{jlk}\textbf{h}_{B,jlk}+ \tilde{\textbf{n}}\right)
\right\}
\nonumber
\\
\overset{(c)}=&(\!1\!-\!\rho_{AD}\!)^2\sigma_n^2\!\left(\!\!
N\mu_j\!+\!\!\sum_{l=1}^L \! \sum_{t=1}^L \! \beta_{jlk}^{\frac{1}{2}}\beta_{jtk}^{\frac{1}{2}}c_{jlk}^*c_{jtk} \textbf{h}_{B,jlk}^H\textbf{h}_{B,jtk}
\!\!\right)
\nonumber
\\
\overset{(d)}=&(\!1\!-\!\rho_{AD}\!)^2\sigma_n^2\left(\!
 N\mu_j +N \sum_{l=1}^L \beta_{jlk} |c_{jlk}|^2
\right.
\nonumber
\\
&~~~~~~~~~~~~\left.+\sum_{l=1}^L \sum_{t\neq l} \beta_{jlk}^{\frac{1}{2}}\beta_{jtk}^{\frac{1}{2}}c_{jlk}^*c_{jtk} \textbf{h}_{B,jlk}^H\textbf{h}_{B,jtk}
\!\!\right)
\!,
\label{In}
\end{align}
where $(a)$ uses \eqref{CE_1}, $(b)$ comes from the fact that AWGN $\textbf{n}$ is uncorrelated with the estimated channel vector, $(c)$ utilizes the fact that estimation noise $\tilde{\textbf{n}}$ is uncorrelated with channel vectors, and $(d)$ uses \eqref{H_B}.

Similarly for the interference due to ADC quantization, i.e., $I_q$ in \eqref{yd_MRC_k}, we have
%\begin{small}
\begin{align}
&\mathbb{E} \left\{ |I_q|^2  \right\}\nonumber\\
=&\mathbb{E} \left\{
\left|\left(\beta_{jjk}^\frac{1}{2}c_{jjk}^*\textbf{h}_{B,jjk}^H+\textbf{e}_{jk}^H\right)\textbf{n}_{q,j}\right|^2
\right\}
\nonumber
\\
\overset{(a)}=&\mathbb{E} \left\{
\left(\sum_{l=1}^L\beta_{jlk}^{\frac{1}{2}}c_{jlk}^*\textbf{h}_{B,jlk}^H+ \tilde{\textbf{n}}^H\right)\textbf{n}_{q,j}\textbf{n}_{q,j}^H
\right.
\nonumber
\\
&\left.~~~~~~~~~~~~~~~~~~~~~~~~\times
\left(\sum_{l=1}^L\beta_{jlk}^{\frac{1}{2}}c_{jlk}\textbf{h}_{B,jlk}+ \tilde{\textbf{n}}\right)
\right\}
\nonumber
\\
\overset{(b)}=&\sigma_{q,j}^2\mathbb{E} \left\{
\left(\sum_{l=1}^L\beta_{jlk}^{\frac{1}{2}}c_{jlk}^*\textbf{h}_{B,jlk}^H+ \tilde{\textbf{n}}^H\right)
\right.
\nonumber
\\
&\left.~~~~~~~~~~~~~~~~~~~~~~~~\times
\left(\sum_{l=1}^L\beta_{jlk}^{\frac{1}{2}}c_{jlk}\textbf{h}_{B,jlk}+ \tilde{\textbf{n}}\right)
\right\}
\nonumber
\\
\overset{(c)}=&\sigma_{q,j}^2\!\left(\!\!
N\mu_j\! +\!\sum_{l=1}^L \sum_{t=1}^L \beta_{jlk}^{\frac{1}{2}}\beta_{jtk}^{\frac{1}{2}}c_{jlk}^*c_{jtk} \textbf{h}_{B,jlk}^H\textbf{h}_{B,jtk}\!\!\right)
\nonumber
\end{align}

~\\
\begin{align}
\overset{(d)}=&\sigma_{q,j}^2\left(
N\mu_j +N \sum_{l=1}^L \beta_{jlk} |c_{jlk}|^2
\right.
\nonumber
\\
&~~~~~~~~~~~\left.+\sum_{l=1}^L \sum_{t\neq l} \beta_{jlk}^{\frac{1}{2}}\beta_{jtk}^{\frac{1}{2}}c_{jlk}^*c_{jtk} \textbf{h}_{B,jlk}^H\textbf{h}_{B,jtk} \right)
,
\label{Iq}
\end{align}
%\end{small}
where $(a)-(d)$ follow the same reasons as in deriving \eqref{In}.

As for the received signal, i.e., $S_r$ in \eqref{yd_MRC_k}, the signal power can be derived as follows
\begin{align}
&\mathbb{E} \left\{ |S_r|^2  \right\}
\nonumber
\\
\overset{(a)}=&(1-\rho_{AD})^2P_t
\nonumber
\\
&\times
\mathbb{E}\left\{ \!\left|\left(\beta_{jjk}^\frac{1}{2}c_{jjk}^*\textbf{h}_{B,jjk}^H\!+\!\textbf{e}_{jk}^H\right)  \sum_{l=1}^L  \sum_{i=1}^K\beta_{jli}^\frac{1}{2}c_{jli} \textbf{h}_{B,jli}x_{li} \right|^2\!\right\}
\nonumber
\\
\overset{(b)}=&(1-\rho_{AD})^2P_t\mathbb{E}\left\{
\left(\sum_{l=1}^L \beta_{jlk}^{\frac{1}{2}}c_{jlk}^*\textbf{h}_{B,jlk}^H+ \tilde{\textbf{n}}^H\right)
\right.
\nonumber
\\
\nonumber
&\left.\times\!
\left(\!\sum_{l=1}^L \!\sum_{i=1}^K\!\beta_{jli}|c_{jli}|^2 \textbf{h}_{B,jli}\textbf{h}_{B,jli}^H \!\right)\!\!
\left(\!\sum_{l=1}^L \!\beta_{jlk}^{\frac{1}{2}}c_{jlk}\textbf{h}_{B,jlk}\!+\! \tilde{\textbf{n}}\!\right)
\!\!\right\}
\nonumber
\\
\overset{(c)}=&(1\!-\!\rho_{AD})^2P_t \left(\mu_j N \sum_{l=1}^L \sum_{i=1}^K\beta_{jli}|c_{jli}|^2
\!+\!\sum_{t=1}^L \beta_{jtk}^{\frac{1}{2}}c_{jtk}^*\textbf{h}_{B,jtk}^H
\right.
\nonumber
\\
&\left.\times
\sum_{l=1}^L \sum_{i=1}^K\beta_{jli}|c_{jli}|^2 \textbf{h}_{B,jli}\textbf{h}_{B,jli}^H
\sum_{r=1}^L \beta_{jrk}^{\frac{1}{2}}c_{jrk}\textbf{h}_{B,jrk}
\right)
\nonumber
\\
=&(1-\rho_{AD})^2P_t \left(\mu_j N \sum_{l=1}^L \sum_{i=1}^K\beta_{jli}|c_{jli}|^2
+\sum_{t=1}^L \sum_{l=1}^L \sum_{i=1}^K \sum_{r=1}^L
\right.
\nonumber
\\
&\left.\beta_{jtk}^{\frac{1}{2}} \beta_{jli}  \beta_{jrk}^{\frac{1}{2}}
c_{jtk}^*|c_{jli}|^2 c_{jrk}\textbf{h}_{B,jtk}^H  \textbf{h}_{B,jli}\textbf{h}_{B,jli}^H  \textbf{h}_{B,jrk}
\right)
,
\label{Sr}
\end{align}
where $(a)$ utilizes \eqref{yd_MRC_k}, $(b)$ uses \eqref{CE_1}, and $(c)$ comes from the fact that $\tilde{\textbf{n}}$ is uncorrelated with channel vectors.
Finally, by substituting \eqref{In}, \eqref{Iq}, and \eqref{Sr} into \eqref{I}, the expression of interference and noise power can be directly obtained.

~\\

\section{Lemma 1}
%\vspace{-0.8cm}
\begin{lemma}
\label{lemma_c}
The analog beamforming gain within the cell $l$, $c_{llk}$, defined in \eqref{c} is bounded as follows
\begin{equation}
\begin{aligned}
\label{C3}
\sqrt{M}\geq|c_{llk}|\geq \sqrt{M}\textrm{sinc} \left(\frac{M}{2}\pi \zeta\right) \triangleq c ,
\end{aligned}
\end{equation}
for $l=1,2,...,L$ and $k=1,2,...,K$.
While for the analog beamforming gain from cell $l$ to cell $j$, $c_{jlk}$, there is the same upper bound as follows
\begin{equation}
\begin{aligned}
\label{C4}
|c_{jlk}|\leq \sqrt{M},
\end{aligned}
\end{equation}
for $j,l=1,2,...,L,~j\neq l$, and $k=1,2,...,K$.
\end{lemma}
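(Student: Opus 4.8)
The plan is to rewrite the beamforming gain \eqref{c} as a Dirichlet kernel and then treat the two bounds separately. Substituting $d=\lambda/2$ into \eqref{H_U} and \eqref{wk2}, the $m$th entries of $\textbf{h}_{U,jlk}$ and $\textbf{w}_{lk}$ reduce to $e^{-j\pi m\cos\varphi_{jlk}}$ and $\frac{1}{\sqrt{M}}e^{-j\pi m\cos\hat{\varphi}_{lk}}$, respectively. Writing $\Delta_{jlk}\triangleq\cos\varphi_{jlk}-\cos\hat{\varphi}_{lk}$, definition \eqref{c} collapses to the geometric sum $c_{jlk}=\frac{1}{\sqrt{M}}\sum_{m=0}^{M-1}e^{j\pi m\Delta_{jlk}}$, whose modulus is
\[
|c_{jlk}|=\frac{1}{\sqrt{M}}\frac{\left|\sin(\pi M\Delta_{jlk}/2)\right|}{\left|\sin(\pi\Delta_{jlk}/2)\right|}.
\]
The two upper bounds are then immediate: the triangle inequality applied to the geometric sum gives $|c_{jlk}|\le\frac{1}{\sqrt{M}}\cdot M=\sqrt{M}$ for every $j,l,k$, with no assumption on how $\hat{\varphi}_{lk}$ relates to the true angle, which is precisely \eqref{C4} and the upper half of \eqref{C3}.

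For the lower bound on the in-cell gain $|c_{llk}|$ I would exploit the selection rule \eqref{AoA2}. The test signal $r_{lk}$ is, up to the AWGN term, proportional to the same kernel evaluated at the trial codeword, so (neglecting the test-tone noise, i.e. in the high-SNR selection regime) \eqref{AoA2} returns exactly the codeword in \eqref{psi} that maximizes the gain modulus. Consequently $|c_{llk}|$ is at least the gain attained at \emph{any} fixed codeword, and in particular at the one whose angle is closest to the true $\varphi_{llk}$. Since \eqref{psi} lists the midpoints of a uniform partition of $[0,\pi]$ into cells of width $2\zeta$ with $\zeta=\frac{\pi}{2^{B+1}}$, this nearest codeword has angular error at most $\zeta$; combined with $|\cos a-\cos b|\le|a-b|$, this yields a direction-cosine residual obeying $|\Delta|\le\zeta$ for that codeword.

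It then remains to lower-bound the kernel uniformly over $|\Delta|\le\zeta$. Using $|\sin(\pi\Delta/2)|\le\pi|\Delta|/2$ in the denominator gives $|c_{llk}|\ge\sqrt{M}\,|\textrm{sinc}(\pi M\Delta/2)|$ with $\textrm{sinc}(x)=\sin x/x$; since $\textrm{sinc}$ is positive and strictly decreasing on $[0,\pi]$ and $\pi M|\Delta|/2\le\pi M\zeta/2<\pi$, this is at least $\sqrt{M}\,\textrm{sinc}(\pi M\zeta/2)=c$, which is \eqref{C3}. I expect the main obstacle to be the two monotonicity-type facts underpinning this last step: staying inside the main lobe requires the mild condition $M\zeta/2<1$ (so that $\textrm{sinc}$ is evaluated below its first zero and the comparison is valid), and maximizing the received \emph{power} rather than the noiseless gain must be reconciled. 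The ``compare against the nearest codeword'' argument handles the latter cleanly, but it is the step where I would be most careful about hidden assumptions.
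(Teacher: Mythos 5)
Your proof is correct and follows essentially the same route as the paper's: reduce $c_{llk}$ to a Dirichlet kernel via the geometric sum, bound the denominator with $\sin x\le x$, and invoke monotonicity of $\textrm{sinc}$ on the main lobe under the resolution condition $\zeta\le 2/M$, together with the same $|\cos a-\cos b|\le|a-b|$ step. The only difference is in one step's justification: where the paper simply assumes the (noise-neglected) AoA error satisfies $\Delta\varphi\sim\mathrm{U}[-\zeta,\zeta]$, you derive $|\Delta\varphi|\le\zeta$ by noting that the selection rule \eqref{AoA2} returns a codeword whose gain is at least that of the codeword nearest the true angle --- a slightly cleaner handling of the same point, but not a different proof.
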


\begin{proof}
Firstly, we consider the analog beamforming gain $c_{llk}$.
Without causing misunderstanding, the index $l$ and $k$ are removed for brevity in most places of the following derivations. Substituting \eqref{H_U} and \eqref{wk2} into \eqref{c}, we have
\begin{align}
c_{llk}&=\frac{1}{\sqrt{M}}\sum\limits_{n=0}^{M-1}e^{jn\pi(\cos\varphi-\cos\hat{\varphi})}\nonumber\\
&=\frac{1}{\sqrt{M}}e^{\frac{1}{2}j(M\!-\!1)\pi(\cos\varphi-\cos\hat{\varphi})}\frac{\sin[\frac{1}{2}M\pi(\cos\varphi\!-\!\cos\hat{\varphi})]}{\sin[\frac{1}{2}\pi(\cos\varphi-\cos\hat{\varphi})]}\nonumber\\
&=\frac{1}{\sqrt{M}}e^{\frac{1}{2}j(M-1)\pi(\cos\varphi-\cos(\varphi-\Delta\varphi))}\nonumber\\
& ~~~~~~~~~~~~~~~\times \frac{\sin[\frac{1}{2}M\pi(\cos\varphi\!-\!\cos(\varphi-\Delta\varphi))]}{\sin[\frac{1}{2}\pi(\cos\varphi-\cos(\varphi-\Delta\varphi))]},
\label{C}
\end{align}
where we define $\Delta\varphi=\varphi-\hat{\varphi}$ as the AoA estimation error.
Obviously, the upper bound for $\sqrt{M}$ is established.
As for the lower bound, assume $\varphi\sim$U$[0,\pi]$ and $\hat{\varphi}$ is chosen from the codebook in \eqref{psi} in order to maximize $|c_{llk}|$.
Although the AoA estimation is affected by thermal noise $\textbf{n}^{\textrm{A}}$ as indicated in \eqref{AoA} and \eqref{AoA2}, the power of the term $\tilde{\textbf{w}}^T\textbf{n}^{\textrm{A}}$ has a constant expectation over $\textbf{n}^{\textrm{A}}$. Thus, the impact of $\textbf{n}^{\textrm{A}}$ on AoA estimation is neglectable when averaged over the noise term.
We then make the assumption that the estimation error follows the distribution $\Delta\varphi\sim$U$[-\zeta,\zeta]$, where $\zeta$ is the phase interval of a quantized analog beamformer.

We temporarily focus on the condition that $\Delta\varphi<0$. Assuming that the phase shifter resolution is reasonably high so that $\zeta\leq\frac{2}{M}$ always holds, we have
\begin{align}
%\begin{small}
0\leq\cos\varphi-\cos(\varphi-\Delta\varphi)\leq-\Delta\varphi\leq\zeta\leq\frac{2}{M},
%\end{small}
\end{align}
which comes from the fact
\begin{align}
%\begin{small}
\frac{\cos\varphi-\cos(\varphi-\Delta\varphi)}{\Delta\varphi}%=\frac{\textrm{d}\cos\varphi}{\textrm{d}\varphi}
\geq-1.
%\end{small}
\end{align}
Thus, we have $0\leq\frac{1}{2}M\pi(\cos\varphi-\cos(\varphi-\Delta\varphi))\leq\pi$ and $0\leq\frac{1}{2}\pi(\cos\varphi-\cos(\varphi-\Delta\varphi))\leq \frac{\pi}{M} \leq \frac{\pi}{2}$ with $M\geq2$, which implies that both the numerator and denominator of the last term in \eqref{C} are positive.
Using this, we have the magnitude of $c_{llk}$ bounded as
\begin{align}
|c_{llk}|
&=\frac{1}{\sqrt{M}}\frac{\sin[\frac{1}{2}M\pi(\cos\varphi-\cos(\varphi-\Delta\varphi))]}{\sin[\frac{1}{2}\pi(\cos\varphi-\cos(\varphi-\Delta\varphi))]}\nonumber\\
&\overset{(a)}\geq\frac{1}{\sqrt{M}}\frac{\sin[\frac{1}{2}M\pi(\cos\varphi-\cos(\varphi-\Delta\varphi))]}{\frac{1}{2}\pi(\cos\varphi-\cos(\varphi-\Delta\varphi))}\nonumber\\
&=\sqrt{M}\textrm{sinc}\left[\frac{1}{2}M\pi\left(\cos\varphi-\cos(\varphi-\Delta\varphi)\right)\right]\nonumber\\
&\overset{(b)}\geq\sqrt{M}\textrm{sinc} \left(\frac{M}{2}\pi \zeta\right),
\label{C4}
\end{align}
where $(a)$ uses the fact that $\sin(x) \leq x$ for $0\leq x\leq \frac{\pi}{2}$ at the denominator, and $(b)$ follows that $\textrm{sinc}(x)$ is a decreasing function w.r.t. $x\in[0,\pi]$.
As for $\Delta\varphi>0$, the conclusion still holds due to symmetry.

Similarly for $c_{jlk}~(j\neq l)$, the upper bound for $\sqrt{M}$ can be easily established.
\end{proof}

\section{Lemma 2-4}
\begin{lemma}
\label{lemma_eta1}
For the two independent channel vectors $\textbf{h}_{B,jlk}$ and $\textbf{h}_{B,j'l'k'}$, where $(j,l,k)\neq (j',l',k')$, it follows for large $N$ that
\begin{equation}
\begin{aligned}
&\mathbb{E}\{\textbf{h}_{B,jlk}^H\textbf{h}_{B,j'l'k'}\}\rightarrow \eta_1,
\end{aligned}
\end{equation}
where $\eta_1\triangleq1+\frac{1}{\pi^2}(\ln N+a)$ and $a$ is the Euler's constant.
\end{lemma}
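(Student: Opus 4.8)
The plan is to expand the inner product coordinate-wise and then exploit the independence of the two angles of incidence. With $d=\lambda/2$, so that $2\pi d/\lambda=\pi$, the array response in \eqref{H_B} has entries $e^{-jn\pi\cos\theta}$, and therefore
\[
\textbf{h}_{B,jlk}^H\textbf{h}_{B,j'l'k'}=\sum_{n=0}^{N-1}e^{jn\pi\cos\theta_{jlk}}e^{-jn\pi\cos\theta_{j'l'k'}}.
\]
Since $(j,l,k)\neq(j',l',k')$, the angles $\theta_{jlk}$ and $\theta_{j'l'k'}$ are independent draws from U$[0,\pi]$, so taking the expectation factorizes each summand into a product of two marginal expectations.

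First I would evaluate the marginal $\mathbb{E}\{e^{jn\pi\cos\theta}\}$ for $\theta\sim$ U$[0,\pi]$. Using the Bessel integral representation,
\[
\mathbb{E}\{e^{jn\pi\cos\theta}\}=\frac{1}{\pi}\int_0^\pi e^{jn\pi\cos\theta}\,d\theta=J_0(n\pi),
\]
where the imaginary part integrates to zero by the symmetry $\theta\mapsto\pi-\theta$ and the real part is exactly the zeroth-order Bessel function. As $J_0$ is real-valued, the conjugate factor is again $J_0(n\pi)$, so that the expected inner product collapses to
\[
\mathbb{E}\{\textbf{h}_{B,jlk}^H\textbf{h}_{B,j'l'k'}\}=\sum_{n=0}^{N-1}J_0^2(n\pi).
\]

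The final step is the asymptotic evaluation of this finite sum. I would isolate the $n=0$ term, which contributes $J_0^2(0)=1$, and treat the tail via the large-argument expansion $J_0(x)=\sqrt{2/(\pi x)}\,[\cos(x-\pi/4)+O(1/x)]$. Evaluating at $x=n\pi$ gives $\cos(n\pi-\pi/4)=(-1)^n/\sqrt{2}$, whence $J_0^2(n\pi)=\tfrac{1}{\pi^2 n}+O(n^{-2})$. Summing over $n$ and invoking the harmonic asymptotic $\sum_{n=1}^{N-1}1/n=\ln N+a+o(1)$, with $a$ the Euler constant, yields $\sum_{n=1}^{N-1}J_0^2(n\pi)\to\frac{1}{\pi^2}(\ln N+a)$, which is exactly the tail identity subsequently used in \eqref{Eta3_3}. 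Adding back the isolated unit term gives $\eta_1=1+\frac{1}{\pi^2}(\ln N+a)$, as claimed.

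The main obstacle will be the tail estimate. One must verify that the oscillatory factor $\cos^2(n\pi-\pi/4)$ equals the exact constant $1/2$ at every integer multiple $x=n\pi$, so that the leading coefficient is genuinely $1/\pi^2$ rather than only a period-average, and that the $O(n^{-2})$ remainders are absolutely summable so their aggregate contribution stays bounded and cannot perturb the logarithmic growth. A minor additional care is needed in replacing $\ln(N-1)$ by $\ln N$, which only alters the vanishing $o(1)$ term and is absorbed into the constant $a$.
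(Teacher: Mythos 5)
Your proof is correct and follows essentially the same route as the paper's: coordinate-wise expansion, factorization by independence of the angles, evaluation of the marginal as $J_0(n\pi)$, and the large-argument Bessel asymptotic $J_0^2(n\pi)\approx\frac{1}{\pi^2 n}$ combined with the harmonic-sum/Euler-constant limit. Your explicit attention to the absolute summability of the $O(n^{-2})$ remainders is a slightly more careful treatment of a point the paper glosses over, but it does not change the argument.
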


\begin{proof}
Using \eqref{H_B}, we have
\begin{align}
&\mathbb{E}\{\textbf{h}_{B,jlk}^H\textbf{h}_{B,j'l'k'}\}\nonumber \\
=&\mathbb{E}\left\{\sum_{n=0}^{N-1}e^{jn\pi\left(\cos\theta_{jlk}\!-\!\cos\theta_{j'l'k'}\right)}\right\}\nonumber \\
\overset{(a)}=&\sum_{n=0}^{N-1}\mathbb{E}\left\{\!e^{jn\pi\cos\theta_{jlk}}\!\right\}\mathbb{E}\left\{\!e^{-jn\pi\cos\theta_{j'l'k'}}\!\right\}\nonumber \\
\overset{(b)}=&\sum_{n=0}^{N-1}J_0^2(n\pi)\nonumber \\
\overset{(c)}\rightarrow &1+\frac{1}{\pi^2}\sum_{n=1}^{N-1}\frac{1}{n}\nonumber \\
\overset{(d)}\rightarrow&1+\frac{1}{\pi^2}(\ln N+a).
\label{eta1}
\end{align}
Firstly, $(a)$ uses the fact that $\theta_{jlk}$ is independent of $\theta_{j'l'k'}$.
Secondly, $(b)$ comes from the equality $\mathbb{E}\left\{e^{jn\pi\cos\theta_{jlk}}\right\}=\mathbb{E}\left\{e^{-jn\pi\cos\theta_{j'l'k'}}\right\}=J_0(n\pi)$.
Take $\mathbb{E}\left\{e^{jn\pi\cos\theta_{jlk}}\right\}$ for instance.
Since $\theta_{jlk}$ follows the uniform distribution U$[0,\pi]$, we have
%\begin{small}
\begin{align}
&~~~~\mathbb{E}\left\{e^{jn\pi\cos\theta_{jlk}}\right\}\nonumber \\
&=\mathbb{E}\left\{\cos (n\pi\cos\theta_{jlk})+j\sin (n\pi\cos\theta_{jlk})\right\}\nonumber \\
&=\!\frac{1}{\pi}\!\int_0^{\pi}\cos (n\pi\cos\theta_{jlk}) \textrm{d}\theta_{jlk}\!+\!\frac{j}{\pi}\int_0^{\pi}\sin (n\pi\cos\theta_{jlk}) \textrm{d}\theta_{jlk}\nonumber \\
&=J_0(n\pi),
\label{J}
\end{align}
%\end{small}
where the last step uses the integral equations \cite[Eqs. (18), (13), pp. 425]{table}, and
\iffalse
\begin{equation}
\int_0^{\pi}\cos(z\cos x)\cos mx \textrm{d}x=\pi\cos\frac{m\pi}{2}J_m(z),
\end{equation}

\begin{equation}
\int_0^{\pi}\sin(z\cos x)\cos mx \textrm{d}x=\pi\sin\frac{m\pi}{2}J_m(z),
\end{equation}
\fi
$J_{\nu}(\cdot)$ is the ${\nu}$th Bessel function.
%Let $m=0, z=n\pi, x=\theta_{jlk}$, \eqref{J} is achieved.
Similarly, the equality holds for $\mathbb{E}\left\{e^{-jn\pi\cos\theta_{j'l'k'}}\right\}=J_0(n\pi)$.
Thirdly, $(c)$ comes from the fact that \cite[Eq. 9.2.1]{handbook}
\begin{align}
%\begin{small}
\label{J2}
J_{\nu}(x)\rightarrow \sqrt{\frac{2}{\pi x}}\cos \left(x-\frac{\nu\pi}{2}-\frac{\pi}{4}\right),
%\end{small}
\end{align}
for $|x|\rightarrow \infty$ with $\nu=0$.
Note that the asymptotical equality in \eqref{J2} behaves tight even for small $x$.
Finally, $(d)$ follows by the assumption that $N\rightarrow\infty$ and the definition of the Euler's constant \cite{handbook} as
\begin{equation}
\label{Euler}
a\triangleq\lim_{n\rightarrow\infty} \left[\sum_{k=1}^{k=n-1}\frac{1}{k}-\ln n\right].
\end{equation}
\end{proof}

\begin{lemma}
\label{lemma_eta2}
For the two independent channel vectors $\textbf{h}_{B,jlk}$ and $\textbf{h}_{B,j'l'k'}$, where $(j,l,k)\neq (j',l',k')$, it follows for large $N$ that
\begin{equation}
\begin{aligned}
&\mathbb{E}\left\{|\textbf{h}_{B,jlk}^H\textbf{h}_{B,j'l'k'}|^2\right\}\rightarrow \eta_2,
\end{aligned}
\end{equation}
where $\eta_2\triangleq N-\frac{2}{\pi^2}(N-1)+\frac{2N}{\pi^2}(\ln N +a)$ and $a$ is the Euler's constant.
\end{lemma}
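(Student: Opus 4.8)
The plan is to expand the squared inner product into a double sum, reduce the expectation to a weighted sum of squared Bessel functions by exploiting independence, and then evaluate that sum with the large-argument Bessel asymptotic already invoked in the previous lemma.

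First I would use \eqref{H_B} with $d=\lambda/2$ to write $\textbf{h}_{B,jlk}^H\textbf{h}_{B,j'l'k'}=\sum_{n=0}^{N-1}e^{jn\pi(\cos\theta_{jlk}-\cos\theta_{j'l'k'})}$, exactly as in the first line of \eqref{eta1}, and then square its modulus:
\[
|\textbf{h}_{B,jlk}^H\textbf{h}_{B,j'l'k'}|^2=\sum_{m=0}^{N-1}\sum_{n=0}^{N-1}e^{j(m-n)\pi\cos\theta_{jlk}}e^{-j(m-n)\pi\cos\theta_{j'l'k'}}.
\]
Taking expectation and using that $\theta_{jlk}$ and $\theta_{j'l'k'}$ are independent, each summand factors. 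By the identity $\mathbb{E}\{e^{jp\pi\cos\theta}\}=J_0(p\pi)$ established in \eqref{J} (valid for every integer $p$ since $J_0$ is even), both factors equal $J_0((m-n)\pi)$, so $\mathbb{E}\{|\textbf{h}_{B,jlk}^H\textbf{h}_{B,j'l'k'}|^2\}=\sum_{m,n}J_0^2((m-n)\pi)$.

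Next I would reindex by the difference $p=m-n$. Since $p$ ranges over $-(N-1),\dots,N-1$ and each value is attained exactly $N-|p|$ times, and using $J_0(0)=1$ together with the evenness of $J_0$, the double sum collapses to $N+2\sum_{p=1}^{N-1}(N-p)J_0^2(p\pi)$ (a quick check gives the total count $N^2$, as it must). To evaluate the remaining sum I would apply the asymptotic \eqref{J2}, which for $\nu=0$ at argument $p\pi$ gives $J_0(p\pi)\approx\sqrt{2/(\pi^2 p)}\,\cos(p\pi-\pi/4)$. The key simplification is that $\cos^2(p\pi-\pi/4)=\tfrac12$ for every integer $p$, whence $J_0^2(p\pi)\approx 1/(\pi^2 p)$ — precisely the approximation already used in \eqref{eta1}. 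Substituting,
\[
2\sum_{p=1}^{N-1}\frac{N-p}{\pi^2 p}=\frac{2}{\pi^2}\Big(N\sum_{p=1}^{N-1}\frac{1}{p}-(N-1)\Big)\longrightarrow \frac{2N}{\pi^2}(\ln N+a)-\frac{2}{\pi^2}(N-1),
\]
where the harmonic sum converges to $\ln N+a$ by the definition of Euler's constant in \eqref{Euler}. Adding the $p=0$ contribution $N$ yields $N-\tfrac{2}{\pi^2}(N-1)+\tfrac{2N}{\pi^2}(\ln N+a)=\eta_2$, as claimed.

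The main obstacle is rigorously justifying the replacement $J_0^2(p\pi)\approx 1/(\pi^2 p)$ inside the \emph{weighted} sum: the weight $N-p$ is as large as $N$, so one must confirm that the cumulative approximation error stays lower order than the $O(N\ln N)$ leading terms of $\eta_2$, and in particular that the finitely many small-$p$ terms (where the asymptotic is least accurate) do not spoil the limit. Here I would lean on the remark following \eqref{J2} that the asymptotic is tight even for small arguments, so that the per-term error decays fast enough in $p$ to render the total error negligible after normalization.
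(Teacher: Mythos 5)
Your proposal is correct and follows essentially the same route as the paper's proof: expand the squared inner product into a double sum, use independence and the identity $\mathbb{E}\{e^{jp\pi\cos\theta}\}=J_0(p\pi)$ to reduce it to $N+2\sum_{p=1}^{N-1}(N-p)J_0^2(p\pi)$, then apply the asymptotic $J_0^2(p\pi)\approx 1/(\pi^2 p)$ and the definition of Euler's constant. The only (immaterial) difference is that you take expectations term by term before reindexing by the difference $p=m-n$, whereas the paper groups the double sum by differences first; your closing remark on controlling the cumulative error of the Bessel approximation under the weight $N-p$ is a point the paper glosses over rather than addresses.
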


\begin{proof}
According to \eqref{H_B}, $\mathbb{E}\{|\textbf{h}_{B,jlk}^H\textbf{h}_{B,j'l'k'}|^2\}$ can be evaluated as
\begin{align}
&\mathbb{E}\left\{|\textbf{h}_{B,jlk}^H\textbf{h}_{B,j'l'k'}|^2\right\}\nonumber\\
\!=\!&\mathbb{E}\left\{\!\sum_{n=0}^{N-1}e^{jn\pi\!(\!\cos\theta_{jlk}\!-\!\cos\theta_{j'l'k'}\!)\!} \!\times\! \sum_{n=0}^{N-1}e^{-jn\pi\!(\!\cos\theta_{jlk}\!-\!\cos\theta_{j'l'k'}\!)\!}\!\right\}\nonumber\\
=&N+\sum_{n=1}^{N-1}(N-n)\nonumber\\
\!\times\!&\left[\mathbb{E}\left\{e^{jn\pi\!(\!\cos\theta_{jlk}\!-\!\cos\theta_{j'l'k'}\!)\!}\right\}\!+\!\mathbb{E}\left\{e^{-jn\pi\!(\!\cos\theta_{jlk}\!-\!\cos\theta_{j'l'k'}\!)\!}\right\}\right]\nonumber\\
\overset{(a)}=&N+\sum_{n=1}^{N-1}(N-n)\times\left[\mathbb{E}\left\{e^{jn\pi\!\cos\theta_{jlk}\!\!}\right\}\mathbb{E}\left\{e^{-jn\pi\!\cos\theta_{j'l'k'}\!\!}\right\}\right.\nonumber\\
&~~~~~~~~~~~~~~~~~~~~~~~~\left.+\mathbb{E}\left\{e^{-jn\pi\!\cos\theta_{jlk}\!\!}\right\}\mathbb{E}\left\{e^{jn\pi\!\cos\theta_{j'l'k'}\!\!}\right\}\right]\nonumber\\
\overset{(b)}=&N+2\sum_{n=1}^{N-1}(N-n)J_0^2(n\pi)\nonumber\\
\overset{(c)}\rightarrow &N+2\sum_{n=1}^{N-1}\frac{N-n}{n\pi^2}\nonumber\\
=&N-\frac{2}{\pi^2}(N-1)+\frac{2N}{\pi^2}\sum_{n=1}^{N-1}\frac{1}{n}\nonumber\\
\overset{(d)}\rightarrow&N-\frac{2}{\pi^2}(N-1)+\frac{2N}{\pi^2}(\ln N +a),
\end{align}
where $(a)-(d)$ use the similar manipulations as the corresponding derivations in \eqref{eta1}.
\end{proof}

\begin{lemma}
\label{lemma_eta3}
For the three independent channel vectors $\textbf{h}_{B,jlk}$, $\textbf{h}_{B,j'l'k'}$ and $\textbf{h}_{B,j''l''k''}$, where $(j,l,k)\neq (j',l',k')$, $(j,l,k)\neq (j'',l'',k'')$, and $(j',l',k')\neq (j'',l'',k'')$, it follows for large $N$ that
\begin{equation}
\begin{aligned}
&\mathbb{E}\left\{\textbf{h}_{B,jlk}^H \textbf{h}_{B,j'l'k'}\textbf{h}_{B,j'l'k'}^H  \textbf{h}_{B,j''l''k''}\right \}\rightarrow \eta_3,
\end{aligned}
\end{equation}
where $\eta_3\triangleq \eta_1+2\sum\limits_{m=1}^{N-1}\sum\limits_{n=0}^{N-m-1} J_0(m\pi)J_0(n\pi)J_0((n+m)\pi)$ and $\eta_1$ is defined in Lemma~\ref{lemma_eta1}.
\end{lemma}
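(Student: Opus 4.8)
The plan is to expand the two Hermitian inner products into a double summation over the antenna indices using the Vandermonde structure of $\textbf{h}_{B,\cdot}$ in \eqref{H_B} (recall $d=\lambda/2$, so the $p$th entry of $\textbf{h}_{B,jlk}$ is $e^{-jp\pi\cos\theta_{jlk}}$), to collapse the expectation into a product of single-angle characteristic functions by independence, and then to reorganize the resulting double sum into a diagonal part and an off-diagonal part. Writing $\theta_1=\theta_{jlk}$, $\theta_2=\theta_{j'l'k'}$, and $\theta_3=\theta_{j''l''k''}$, the first step gives
\[
\textbf{h}_{B,jlk}^H\textbf{h}_{B,j'l'k'}\,\textbf{h}_{B,j'l'k'}^H\textbf{h}_{B,j''l''k''}=\sum_{p=0}^{N-1}\sum_{q=0}^{N-1} e^{jp\pi\cos\theta_1}\,e^{j(q-p)\pi\cos\theta_2}\,e^{-jq\pi\cos\theta_3}.
\]

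Because the three angles are mutually independent, taking the expectation factorizes each summand as $\mathbb{E}\{e^{jp\pi\cos\theta_1}\}\,\mathbb{E}\{e^{j(q-p)\pi\cos\theta_2}\}\,\mathbb{E}\{e^{-jq\pi\cos\theta_3}\}$. Here I would invoke the identity $\mathbb{E}\{e^{\pm jn\pi\cos\theta}\}=J_0(n\pi)$ already established in \eqref{J}, together with the evenness of $J_0$, so that the $(p,q)$ term reduces to $J_0(p\pi)\,J_0(|q-p|\pi)\,J_0(q\pi)$. This reduces the whole claim to evaluating the symmetric double sum $\sum_{p,q}J_0(p\pi)J_0(|q-p|\pi)J_0(q\pi)$.

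The second step splits this sum. The diagonal $p=q$ contributes $\sum_{p=0}^{N-1}J_0^2(p\pi)$, which tends to $\eta_1$ by the intermediate limit already computed in \eqref{eta1} of \emph{Lemma~\ref{lemma_eta1}}. Since the summand $J_0(p\pi)J_0(|q-p|\pi)J_0(q\pi)$ is symmetric under $p\leftrightarrow q$, the two off-diagonal triangles coincide, so the off-diagonal part equals $2\sum_{p<q}J_0(p\pi)J_0(q\pi)J_0((q-p)\pi)$. Substituting $m=q-p\ge 1$ and $n=p\ge 0$ (so that $q=n+m\le N-1$ forces $0\le n\le N-m-1$) rewrites this as $2\sum_{m=1}^{N-1}\sum_{n=0}^{N-m-1}J_0(m\pi)J_0(n\pi)J_0((n+m)\pi)$, which is precisely the second term in the definition of $\eta_3$. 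Adding the diagonal and off-diagonal parts yields the stated convergence.

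I expect the main obstacle to be purely bookkeeping rather than analytic: getting the reindexing $m=q-p$, $n=p$ and its summation limits exactly right, and making clear that the limiting arrow ``$\rightarrow$'' enters only through the diagonal term $\sum_{p}J_0^2(p\pi)\to\eta_1$, while the off-diagonal double sum is already an exact finite quantity. No new estimates are required beyond the characteristic-function identity \eqref{J} and the harmonic-sum asymptotics already exploited in \emph{Lemmas~\ref{lemma_eta1}} and \emph{\ref{lemma_eta2}}.
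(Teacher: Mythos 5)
Your proposal is correct and follows essentially the same route as the paper's proof: both expand the product of inner products into a double sum, reduce each term to products of $J_0$ factors via the characteristic-function identity \eqref{J} and independence of the angles, split into the diagonal part (which alone carries the limit $\sum_p J_0^2(p\pi)\rightarrow\eta_1$) and the exact off-diagonal double sum $2\sum_{m=1}^{N-1}\sum_{n=0}^{N-m-1}J_0(m\pi)J_0(n\pi)J_0((n+m)\pi)$. The only cosmetic difference is that the paper takes the expectation iteratively by conditioning on the shared angle $\theta_{j'l'k'}$ before expanding, whereas you factorize each $(p,q)$ term directly; the two orderings are equivalent.
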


\begin{proof}
Using \eqref{H_B},~$\mathbb{E}\left\{\textbf{h}_{B,jlk}^H \textbf{h}_{B,j'l'k'}\textbf{h}_{B,j'l'k'}^H  \textbf{h}_{B,j''l''k''}\right \}$ can be evaluated as
\begin{align}
&\mathbb{E}\left\{\textbf{h}_{B,jlk}^H \textbf{h}_{B,j'l'k'}\textbf{h}_{B,j'l'k'}^H  \textbf{h}_{B,j''l''k''}\right \}\nonumber\\
=&\mathbb{E}\left\{\!\sum_{n=0}^{N-1}\!e^{jn\pi\!(\!\cos\theta_{jlk}\!-\!\cos\theta_{j'l'k'}\!)\!}  \sum_{n=0}^{N-1}\!e^{jn\pi\!(\!\cos\theta_{j'l'k'}\!-\!\cos\theta_{j''l''k''}\!)\!}\!\right\}\nonumber\\
=&\mathbb{E}_{\theta_{j'l'k'}}\left\{  \mathbb{E}_{\theta_{jlk},\theta_{j''l''k''}|\theta_{j'l'k'}}\left\{    \!\sum_{n=0}^{N-1}e^{jn\pi\!(\!\cos\theta_{jlk}\!-\!\cos\theta_{j'l'k'}\!)\!} \!\right.\right.\nonumber\\
&\left.\left.~~~~~~~~~~~~~~~~~~~~~~~~~\times\! \sum_{n=0}^{N-1}e^{jn\pi\!(\!\cos\theta_{j'l'k'}\!-\!\cos\theta_{j''l''k''}\!)\!}    \right\} \right\}\nonumber \\
\overset{(a)}=&\mathbb{E}_{\theta_{j'l'k'}}\left\{  \sum_{n=0}^{N-1}e^{-jn\pi\!\cos\theta_{j'l'k'}\!} \mathbb{E}_{\theta_{jlk}}\left\{  e^{jn\pi\!(\!\cos\theta_{jlk})}    \right\}\right.\nonumber\\
&\left.~~~~~~~\times \sum_{n=0}^{N-1}e^{jn\pi\!\cos\theta_{j'l'k'}\!} \mathbb{E}_{\theta_{j''l''k''}}\left\{  e^{-jn\pi\!(\!\cos\theta_{j''l''k''})}    \right\}\right\}\nonumber\\
\overset{(b)}=&\mathbb{E}_{\theta_{j'l'k'}}\! \! \left\{ \! \sum_{n=0}^{N-1}\!\!J_0(n\pi)e^{\!-\!jn\pi\!\cos\theta_{j'l'k'}\!} \!   \sum_{n=0}^{N-1}J_0(n\pi)e^{jn\pi\!\cos\theta_{j'l'k'}\!} \!\!\right\}\nonumber
\end{align}
\begin{align}
=&\mathbb{E}_{\theta_{j'l'k'}}\! \! \left\{ \sum_{n=0}^{N-1}\! J_0^2(n\pi)\!+\! \sum_{m=1}^{N-1}\!\left(\!e^{\!-\!jm\pi\!\cos\theta_{j'l'k'}\!}+e^{jm\pi\!\cos\theta_{j'l'k'}\!}\!\right)\!\right.\nonumber \\
 &\left.~~~~~~~~~~~~~~~~~~~~~~~~~~\times \sum_{n=0}^{N-m-1} \!  J_0(n\pi)J_0((n+m)\pi)       \right\}\nonumber\\
\overset{(c)}=& \sum_{n=0}^{N-1}\! J_0^2(n\pi)\!+\!2 \sum_{m=1}^{N-1}  \sum_{n=0}^{N-m-1} \!J_0(m\pi)  J_0(n\pi)J_0((n+m)\pi) \nonumber\\
\overset{(d)} \rightarrow & \eta_1 +  2 \sum_{m=1}^{N-1}  \sum_{n=0}^{N-m-1} \!J_0(m\pi)  J_0(n\pi)J_0((n+m)\pi)  ,
\end{align}
where $(a)$ utilizes the fact that $\theta_{jlk}$ is independent of $\theta_{j''l''k''}$,
$(b)$ and $(c)$ come from $\mathbb{E}\left\{\!e^{-jn\pi\cos\theta_{j''l''k''}}\!\right\}$
$=\mathbb{E}\left\{\!e^{jn\pi\cos\theta_{jlk}}\!\right\}=\mathbb{E}\left\{\!e^{jn\pi\cos\theta_{j'l'k'}}\!\right\}=\mathbb{E}\left\{\!e^{-jn\pi\cos\theta_{j'l'k'}}\!\right\}$$=J_0(n\pi)$ demonstrated in \eqref{J},
and $(d)$ uses \eqref{eta1}.
\end{proof}

\section{Lemma 5}
\begin{lemma}
\label{lemma_ortho}
The mmWave MIMO channel matrix $\bar{\textbf{H}}_{jl}$ in \eqref{Heq} is asymptotically orthogonal with large $N$. Letting $N\rightarrow \infty$, we have
\begin{equation}
\begin{aligned}
\label{Heq_or}
\mathbb{E}\left\{\frac{1}{N}\bar{\textbf{H}}_{jl}^H \bar{\textbf{H}}_{jl}\right\}\rightarrow \textbf{B}_{jl} \textbf{C}_{jl}^H\textbf{C}_{jl}.
\end{aligned}
\end{equation}
\end{lemma}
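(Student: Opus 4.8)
The plan is to verify the claimed limit entry by entry, exploiting the fact that the columns of $\bar{\textbf{H}}_{jl}$ in \eqref{Heq} are merely scaled copies of the BS array response vectors $\textbf{h}_{B,jlk}$. Writing out the $(k,k')$ entry of the $K\times K$ Gram matrix from the column structure, I would obtain
\[
[\bar{\textbf{H}}_{jl}^H \bar{\textbf{H}}_{jl}]_{kk'}=\beta_{jlk}^{\frac{1}{2}}\beta_{jlk'}^{\frac{1}{2}}c_{jlk}^*c_{jlk'}\textbf{h}_{B,jlk}^H\textbf{h}_{B,jlk'},
\]
so that the entire argument reduces to controlling the inner products $\textbf{h}_{B,jlk}^H\textbf{h}_{B,jlk'}$ after the $\frac{1}{N}$ scaling. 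Throughout, the beamforming gains $c_{jlk}$ and the large-scale coefficients $\beta_{jlk}$ are treated as given, since they depend on the user-side angles, which are independent of the BS-side directions $\theta_{jlk}$ that drive $\textbf{h}_{B,jlk}$; the expectation therefore acts effectively only on the array responses.

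For the diagonal entries I would note from \eqref{H_B} that $\textbf{h}_{B,jlk}^H\textbf{h}_{B,jlk}=N$ exactly, because every component has unit modulus. Hence $\frac{1}{N}[\bar{\textbf{H}}_{jl}^H\bar{\textbf{H}}_{jl}]_{kk}=\beta_{jlk}|c_{jlk}|^2$ with no randomness remaining, which is precisely the $k$th diagonal element of $\textbf{B}_{jl}\textbf{C}_{jl}^H\textbf{C}_{jl}$. For the off-diagonal entries with $k\neq k'$, the two array response vectors correspond to distinct, independent incidence angles, so Lemma~\ref{lemma_eta1} applies and gives $\mathbb{E}\{\textbf{h}_{B,jlk}^H\textbf{h}_{B,jlk'}\}\rightarrow\eta_1=1+\frac{1}{\pi^2}(\ln N+a)$. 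Since $|c_{jlk}|\leq\sqrt{M}$ by Lemma~\ref{lemma_c} and the $\beta_{jlk}$ are bounded, the scaled off-diagonal entry is bounded in magnitude by a constant multiple of $\eta_1/N$, and $\eta_1/N=\frac{1}{N}+\frac{\ln N+a}{\pi^2 N}\rightarrow0$ as $N\rightarrow\infty$. Thus every off-diagonal entry vanishes in the limit.

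Combining the two cases, $\mathbb{E}\{\frac{1}{N}\bar{\textbf{H}}_{jl}^H\bar{\textbf{H}}_{jl}\}$ converges to the diagonal matrix with entries $\beta_{jlk}|c_{jlk}|^2$, that is, to $\textbf{B}_{jl}\textbf{C}_{jl}^H\textbf{C}_{jl}$, which is the desired conclusion. The only genuinely nontrivial step is the off-diagonal decay: it rests entirely on Lemma~\ref{lemma_eta1}, whose content is that the expected inner product of two independent mmWave steering vectors grows only logarithmically in $N$ rather than linearly, so that the $\frac{1}{N}$ normalization drives it to zero. I expect this to be the main obstacle, in the sense that it is the one place where the sparse-channel structure and the Bessel-function asymptotics behind $\eta_1$ are essential, whereas the diagonal computation is immediate.
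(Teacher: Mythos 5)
Your proof is correct and follows essentially the same route as the paper's: an entrywise analysis of the Gram matrix, with the diagonal equal to $N\beta_{jlk}|c_{jlk}|^2$ exactly and the off-diagonal entries controlled by Lemma~\ref{lemma_eta1} together with the observation that $\eta_1/N\rightarrow 0$. The only difference is cosmetic: the paper first normalizes $\beta_{jlk}=c_{jlk}=1$ and dismisses the general case as ``similarly established,'' whereas you carry the coefficients through directly, conditioning on the user-side angles and bounding $|c_{jlk}|\leq\sqrt{M}$ via Lemma~\ref{lemma_c}, which is if anything slightly more complete.
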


\begin{proof}
Assume that $\beta_{jlk}=c_{jlk}=1$ for $j,l=1,2,...,L$ and $k=1,2,...,K$ for brevity, which does not affect the orthogonality of channel matrix $\bar{\textbf{H}}_{jl}$.
%According to \eqref{H_B} and \eqref{Heq}, the diagonal entries of $\frac{1}{N}\bar{\textbf{H}}_{jl}^H \bar{\textbf{H}}_{jl}$ are BS antenna number $N$.
%Applying \emph{Proposition 2}, the diagonal entries of $\frac{1}{N}\bar{\textbf{H}}_{jl}^H \bar{\textbf{H}}_{jl}$ tend to be $\frac{\eta_1}{N}$.
Substituting \eqref{H_B} to \eqref{Heq} and applying \emph{Lemma~\ref{lemma_eta1}}, we have
\begin{equation}
\begin{aligned}
\label{Heq_or_2}
\mathbb{E}\left\{\frac{1}{N}\bar{\textbf{H}}_{jl}^H \bar{\textbf{H}}_{jl}\right\}\rightarrow
\begin{bmatrix}
1  &  \frac{\eta_1}{N}  & \cdots\ & \frac{\eta_1}{N}\\
\frac{\eta_1}{N}  &  1  & \cdots\ & \frac{\eta_1}{N}\\
 \vdots   & \vdots & \ddots  & \vdots  \\
 \frac{\eta_1}{N} & \frac{\eta_1}{N}  & \cdots\ & 1\\
\end{bmatrix}.
\end{aligned}
\end{equation}
As $N$ goes to infinity, the non-diagonal elements of the above matrix asymptotically converge as
\begin{equation}
\begin{aligned}
\label{Heq_or_3}
\frac{\eta_1}{N}= \frac{1}{N}+\frac{\ln N+a}{N\pi^2} \rightarrow 0.
\end{aligned}
\end{equation}
Thus, we have
\begin{equation}
\label{Heq_or_4}
\mathbb{E}\left\{\frac{1}{N}\bar{\textbf{H}}_{jl}^H \bar{\textbf{H}}_{jl}\right\}\rightarrow  \textbf{I}_{K}.
\end{equation}
For general but finite values of $\beta_{jlk}$ and $c_{jlk}$, the orthogonality can be similarly established, which proves \emph{Lemma 5}.
\end{proof}

\section{Proof of Theorem 1}
\begin{proof}
In this appendix, we give the proof of \emph{Theorem~\ref{theorem_rate}} by applying \emph{Lemma~\ref{lemma_c}-\ref{lemma_eta3}} in Appendices B-C.
Start with deriving the lower bound for $\gamma$ in \eqref{gamma}.
Substituting \eqref{beta} into \eqref{Sd1} simplifies $S$ and yields
\begin{equation}
\begin{aligned}
\label{Sd2}
\frac{S}{|c_{jjk}|^4}=(1-\rho_{AD})^2P_tN^2.
\end{aligned}
\end{equation}
Before considering $\mathbb{E}\{|I_n|^2\}$, $\mathbb{E}\{|I_q|^2\}$, and $\mathbb{E}\{|S_r|^2\}$ in \eqref{I}, we introduce two definitions used for notational brevity.
Firstly, $\lambda$ is defined as
\begin{align}
&\frac{1}{|c_{jjk}|^2} \sum_{l=1}^L  \sum_{i=1}^K \beta_{jli} |c_{jli}|^2 \nonumber\\
\overset{(a)}=&1+\sum_{i\neq k} \frac{|c_{jji}|^2}{|c_{jjk}|^2} + \beta \sum_{l\neq j}  \sum_{i=1}^K \frac{|c_{jli}|^2}{|c_{jjk}|^2} \nonumber\\
\overset{(b)}\leq & c^{-2} [c^2+ (K-1)M+ \beta (L-1)KM] \nonumber\\
\triangleq & c^{-2} \lambda,
\label{lambda}
\end{align}
where $(a)$ utilizes \eqref{beta} and $(b)$ applies \emph{Lemma~\ref{lemma_c}}.
Secondly, for $\mu_j$ defined in \eqref{mu}, substitute \eqref{C_npq} and use \emph{Lemma~\ref{lemma_c}} and it yields
\begin{align}
&\frac{\mu_j }{|c_{jjk}|^2}\nonumber\\
=&\frac{\sigma_n^2}{(1\!-\!\rho_{AD}) P_p |c_{jjk}|^2}\!+\!\frac{\rho_{AD}}{(1-\rho_{AD})\tau |c_{jjk}|^2}  \sum_{l=1}^L  \sum_{i=1}^K \beta_{jli} |c_{jli}|^2\nonumber\\
\leq & c^{-2} \left[\frac{\sigma_n^2 }{(1-\rho_{AD}) P_p}+\frac{\rho_{AD} \lambda}{(1-\rho_{AD})\tau } \right]\nonumber\\
\triangleq & c^{-2}\mu,
\label{mu2}
\end{align}
where the inequality uses \eqref{lambda} and \emph{Lemma~\ref{lemma_c}}.

Now from \eqref{In}, an upper bound to $\frac{1}{|c_{jjk}|^4} \mathbb{E}\{|I_n|^2\}$ is obtained as
\begin{align}
&\frac{1}{|c_{jjk}|^4}\mathbb{E}\{|I_n|^2\}\nonumber\\
=&\frac{1}{|c_{jjk}|^4} (1-\rho_{AD}\!)^2\sigma_n^2 \left(\! N\mu_j \!+\!N \sum_{l=1}^L \beta_{jlk} |c_{jlk}|^2
\right.
\nonumber\\
&~~~~~~~~~~~~~~\left.+\sum_{l=1}^L \sum_{t\neq l} \beta_{jlk}^{\frac{1}{2}}\beta_{jtk}^{\frac{1}{2}}c_{jlk}^*c_{jtk} \textbf{h}_{B,jlk}^H\textbf{h}_{B,jtk}
\!\right)\nonumber\\
\overset{(a)}\leq &\frac{1}{|c_{jjk}|^2}(1-\rho_{AD})^2\sigma_n^2
\left(\! N\mu c^{-2} \!+\!\frac{N}{|c_{jjk}|^2} \sum_{l=1}^L \beta_{jlk} |c_{jlk}|^2
\right.
\nonumber\\
&~~~~~~~~~~\left.+\frac{1}{|c_{jjk}|^2} \sum_{l=1}^L \sum_{t\neq l} \beta_{jlk}^{\frac{1}{2}}\beta_{jtk}^{\frac{1}{2}}c_{jlk}^*c_{jtk} \textbf{h}_{B,jlk}^H\textbf{h}_{B,jtk}
\!\right)\nonumber\\
\overset{(b)}\leq & c^{-4}(1-\rho_{AD})^2\sigma_n^2\nonumber\\
&\times \!
\left[\!N\mu\!+\!Nc^2\!+\!N(\!L\!-\!1\!)\beta M \!+\!\sum_{l\neq j}\! \sum_{\substack{t\neq j\\t\neq l}}\! \beta M \textbf{h}_{B,jlk}^H\textbf{h}_{B,jtk} \right.\nonumber\\
&\left.~~~+ \beta^{\frac{1}{2}} c M ^{\frac{1}{2}} \left( \sum_{t\neq j} \textbf{h}_{B,jjk}^H\textbf{h}_{B,jtk} + \sum_{l\neq j} \textbf{h}_{B,jlk}^H\textbf{h}_{B,jjk}  \right) \right]\nonumber\\
\triangleq & I_{UB,n}
,
\label{I_UBn}
\end{align}
where $(a)$ uses \eqref{mu2} and $(b)$ utilizes \emph{Lemma~\ref{lemma_c}} and \eqref{beta}.

Similarly, by substituting \eqref{Iq}, an upper bound to $\frac{1}{|c_{jjk}|^4} \mathbb{E}\{|I_q|^2\}$ is derived as
\begin{align}
&\frac{1}{|c_{jjk}|^4}\mathbb{E}\{|I_q|^2\}\nonumber\\
=&\frac{1}{|c_{jjk}|^4} \sigma_{q,j}^2 \left(\! N\mu_j \!+\!N \sum_{l=1}^L \beta_{jlk} |c_{jlk}|^2
\right.
\nonumber\\
&~~~~~~~~~~~~~~~~\left.+\sum_{l=1}^L \sum_{t\neq l} \beta_{jlk}^{\frac{1}{2}}\beta_{jtk}^{\frac{1}{2}}c_{jlk}^*c_{jtk} \textbf{h}_{B,jlk}^H\textbf{h}_{B,jtk}
\!\right)\nonumber\\
\overset{(a)}\leq & \frac{c^{-2}}{|c_{jjk}|^2}  \rho_{AD}(1-\!\rho_{AD})(\sigma_n^2\!+\!\lambda P_t) \left(\! N\mu_j \!+\!N \sum_{l=1}^L \beta_{jlk} |c_{jlk}|^2
\right.
\nonumber\\
&~~~~~~~~~~~~~~~~\left.+\sum_{l=1}^L \sum_{t\neq l} \beta_{jlk}^{\frac{1}{2}}\beta_{jtk}^{\frac{1}{2}}c_{jlk}^*c_{jtk} \textbf{h}_{B,jlk}^H\textbf{h}_{B,jtk}
\!\right)\nonumber\\
%\overset{(b)}\leq & c^{\!-2}\rho_{AD}(\!1\!-\!\rho_{AD}\!)(\!\sigma_n^2\!+\!\lambda P_t\!)
%\left(\!\! N\mu c^{\!-2} \!+\!\frac{N}{\!|c_{jjk}|^2\!}\! \sum_{l=1}^L\! \beta_{jlk} |c_{jlk}|^2
%\right.
%\nonumber\\
%&~~~~~~~~~~\left.+\frac{1}{|c_{jjk}|^2} \sum_{l=1}^L \sum_{t\neq l} \beta_{jlk}^{\frac{1}{2}}\beta_{jtk}^{\frac{1}{2}}c_{jlk}^*c_{jtk} \textbf{h}_{B,jlk}^H\textbf{h}_{B,jtk}
%\!\right)\nonumber\\
\overset{(b)}\leq &c^{-4}\rho_{AD}(1\!-\!\rho_{AD})(\sigma_n^2\!+\!\lambda P_t)\nonumber\\
&\times \!
\left[\!N\mu\!+\!Nc^2\!+\!N(\!L\!-\!1\!)\beta M \!+\!\sum_{l\neq j}\! \sum_{\substack{t\neq j\\t\neq l}}\! \beta M \textbf{h}_{B,jlk}^H\textbf{h}_{B,jtk} \right.\nonumber\\
&\left.~~~+ \beta^{\frac{1}{2}} c M ^{\frac{1}{2}} \left( \sum_{t\neq j} \textbf{h}_{B,jjk}^H\textbf{h}_{B,jtk} + \sum_{l\neq j} \textbf{h}_{B,jlk}^H\textbf{h}_{B,jjk}  \right) \right]\nonumber\\
\triangleq & I_{UB,q}
,
\label{I_UBq}
\end{align}
where $(a)$ utilizes the inequality that
\begin{align}
\frac{\sigma_{q,j}^2}{|c_{jjk}|^2}
\!=&\rho_{AD}(\!1\!-\!\rho_{AD}\!) \left(\!\frac{\sigma_n^2}{|c_{jjk}|^2}\!+\!\frac{P_t }{|c_{jjk}|^2} \sum_{l=1}^L\sum_{k=1}^K\beta_{jlk} |c_{jlk}|^2 \!\!\right)\nonumber\\
\leq& c^{-2}\rho_{AD}(1-\rho_{AD})(\sigma_n^2+\lambda P_t),
\label{sigma_q_2}
\end{align}
which uses \eqref{sigma_q} and the inequality utilizes \emph{Lemma~\ref{lemma_c}} and \eqref{lambda},
and $(b)$ utilizes the similar manipulations as in \eqref{I_UBn}.

For the term of $\mathbb{E}\{|S_r|^2\}$ in \eqref{I}, we use \eqref{H_B} to rewrite the expression in \eqref{Sr} as
%\iffalse
\begin{align}
&\mathbb{E}\{|S_r|^2\}\nonumber\\
=&(1\!-\!\rho_{AD})^2P_t \left(\!\!\mu_j N \sum_{l=1}^L \sum_{i=1}^K\beta_{jli}|c_{jli}|^2
\!+\!N^2 \sum_{t=1}^L  \beta_{jtk}^2|c_{jtk}|^4
\right.\nonumber\\
&+\sum_{t=1}^L \sum_{(l,i)\neq (t,k)} \beta_{jtk}\beta_{jli}|c_{jtk}|^2 |c_{jli}|^2  |\textbf{h}_{B,jtk}^H  \textbf{h}_{B,jli}|^2
\nonumber\\
&+N \sum_{t=1}^L  \sum_{r\neq t}
\beta_{jtk}^{\frac{3}{2}}  \beta_{jrk}^{\frac{1}{2}} c_{jtk}^* |c_{jtk}|^2 c_{jrk}
\textbf{h}_{B,jtk}^H  \textbf{h}_{B,jrk}
\nonumber\\
&+N\sum_{t=1}^L \sum_{r\neq t}
\beta_{jtk}^{\frac{1}{2}}  \beta_{jrk}^{\frac{3}{2}} c_{jtk}^* |c_{jrk}|^2 c_{jrk}
\textbf{h}_{B,jtk}^H \textbf{h}_{B,jrk}
\nonumber\\
&+\sum_{t=1}^L  \sum_{r\neq t} \sum_{\substack{(l,i)\neq (t,k)\\ (l,i)\neq(r,k)}}
\beta_{jtk}^{\frac{1}{2}} \beta_{jli} \beta_{jrk}^{\frac{1}{2}} c_{jtk}^* |c_{jli}|^2 c_{jrk}
\nonumber\\
&~~~~~~~~~~~~~~~~~~~~~~~~~~\times
\textbf{h}_{B,jtk}^H \textbf{h}_{B,jli}\textbf{h}_{B,jli}^H  \textbf{h}_{B,jrk}
\Bigg)
,
\label{Sr2}
\end{align}
%\fi
Then, by applying \emph{Lemma~\ref{lemma_c}} and substituting \eqref{beta}, \eqref{lambda}, and \eqref{mu2} into \eqref{Sr2}, an upper bound to $\frac{1}{|c_{jjk}|^4} \mathbb{E}\{|S_r|^2\}$ is obtained as follows
%\iffalse
\begin{align}
&\frac{1}{|c_{jjk}|^4} \mathbb{E}\{|S_r|^2\}\nonumber\\
\leq &
(1-\rho_{AD})^2P_t c^{-4}\Bigg[\mu N \lambda+N^2c^4+N^2(L-1)\beta^2M^2
\nonumber\\
&+c^2 M \sum_{i\neq k}   |\textbf{h}_{B,jjk}^H  \textbf{h}_{B,jji}|^2
\!+\!\beta c^2 M \sum_{l\neq j} \sum_{i=1}^K   |\textbf{h}_{B,jjk}^H  \textbf{h}_{B,jli}|^2\nonumber\\
&+\beta M^2 \sum_{t\neq j} \!\sum_{i=1}^K  \! |\textbf{h}_{B,jtk}^H  \textbf{h}_{B,jji}|^2\nonumber\\
&+\beta^2 M^2  \sum_{t\neq j} \sum_{\substack{(l,i)\neq (t,k) \\ l\neq j}}  |\textbf{h}_{B,jtk}^H  \textbf{h}_{B,jli}|^2\nonumber\\
&+N \!\left(\!\beta^{\frac{1}{2}} c^3 M^{\frac{1}{2}}\!\!+\!\!\beta^{\frac{3}{2}} c M^{\frac{3}{2}}\!\right) \sum_{r\neq j}  \textbf{h}_{B,jjk}^H \textbf{h}_{B,jrk}\nonumber \\
&+N \left(\beta^{\frac{3}{2}} c M^{\frac{3}{2}} + \beta^{\frac{1}{2}} c^3 M^{\frac{1}{2}}\right)  \sum_{t\neq j} \textbf{h}_{B,jtk}^H  \textbf{h}_{B,jjk}\nonumber\\
&+2N \beta^2M^2 \sum_{t\neq j}  \sum_{\substack{r\neq j\\r\neq t}} \textbf{h}_{B,jtk}^H  \textbf{h}_{B,jrk}
+A
\Bigg]\nonumber\\
\triangleq & S_{UB,r}
,
\label{S_UBr}
\end{align}
where $A$ comes from the last term in \eqref{Sr2} which can be expressed as
\begin{align}
A=&\beta^{\frac{1}{2}} c M^{\frac{3}{2}} \sum_{r\neq j} \sum_{i\neq k}
 \textbf{h}_{B,jjk}^H \textbf{h}_{B,jji}\textbf{h}_{B,jji}^H  \textbf{h}_{B,jrk}\nonumber\\
&+\beta^{\frac{1}{2}} c M^{\frac{3}{2}} \sum_{t\neq j} \sum_{i\neq k}
 \textbf{h}_{B,tjk}^H \textbf{h}_{B,jji}\textbf{h}_{B,jji}^H  \textbf{h}_{B,jjk}\nonumber\\
&+ \beta^{\frac{3}{2}} c M^{\frac{3}{2}} \sum_{r\neq j} \sum_{\substack{(l,i)\neq (r,k)\\l\neq j}}
 \textbf{h}_{B,jjk}^H \textbf{h}_{B,jli}\textbf{h}_{B,jli}^H  \textbf{h}_{B,jrk} \nonumber\\
&+ \beta^{\frac{3}{2}} c M^{\frac{3}{2}} \sum_{t\neq j} \sum_{\substack{(l,i)\neq (r,k)\\l\neq j}}
 \textbf{h}_{B,tjk}^H \textbf{h}_{B,jli}\textbf{h}_{B,jli}^H  \textbf{h}_{B,jjk}\nonumber \\
&+\beta M^2 \sum_{t\neq j}  \sum_{\substack{r\neq j\\r\neq t}} \sum_{i=1}^K
\textbf{h}_{B,jtk}^H \textbf{h}_{B,jji}\textbf{h}_{B,jji}^H  \textbf{h}_{B,jrk} \nonumber\\
&+\beta^2 M^2 \sum_{t\neq j}  \sum_{\substack{r\neq j\\r\neq t}} \sum_{\substack{(l,i)\neq(t,k)\\(l,i)\neq(r,k)\\l\neq j}}\!
\textbf{h}_{B,jtk}^H \textbf{h}_{B,jli}\textbf{h}_{B,jli}^H  \textbf{h}_{B,jrk}
.
\label{A}
\end{align}
%\fi
Thus, based on \eqref{I}, \eqref{Sd2}, \eqref{I_UBn}, \eqref{I_UBq}, and \eqref{S_UBr}, a lower bound to $\gamma$ in \eqref{gamma} is obtained as
\begin{align}
\gamma&=\frac{S/|c_{jjk}|^4}{I/|c_{jjk}|^4}\nonumber\\
&\geq\frac{(1-\rho_{AD})^2P_tN^2}{I_{UB,n}+I_{UB,q}+S_{UB,r}-S}\nonumber\\
&\triangleq\gamma_{LB}.
\label{gamma2}
\end{align}
Then, using the Jensen's inequality, a lower bound to the achievable rate in \eqref{rate1} is established
\begin{align}
R&=\mathbb{E}\left\{\log(1+\gamma)\right\}\nonumber\\
&\geq\mathbb{E}\left\{\log(1+\gamma_{LB})\right\}\nonumber\\
&\geq\log\left(1+\frac{(1-\rho_{AD})^2P_tN^2}{\mathbb{E}\{I_{UB,n}+I_{UB,q}+S_{UB,r}-S\}}\right).
\label{bound1}
\end{align}

Thus far, by further plugging the expectation values of $I_{UB,n}$ in \eqref{I_UBn}, $I_{UB,q}$ in \eqref{I_UBq}, and $S_{UB,r}$ in \eqref{S_UBr} from \emph{Lemma~\ref{lemma_eta1}-\ref{lemma_eta3}} in Appendix~C, we obtain the desired bound in \eqref{R_l}.
\end{proof}

\end{appendices}

\end{document}